\theoremstyle{plain}
\newtheorem{proposition}{\bf Proposition}
\newtheorem{theorem}{\bf Theorem}
\newtheorem{lemma}{\bf Lemma}
\newtheorem{corollary}{\bf Corollary}
\theoremstyle{definition}
\newtheorem{definition}{\bf Definition}
\newtheorem{example}{\bf Example}
\newtheorem{remark}{\bf Remark}
\thanks{$^*$This research was prompted by Dan Felsenthal and Mosh\'{e} Machover's discussion on
``Voting power when voters' independence is not assumed'' at the \emph{Voting Power \& Procedures Workshop 2007}
in Warwick. The authors acknowledge helpful feedback from seminar audiences in Bayreuth, Bielefeld, Bilbao, Jena
and Turku.}
\begin{document}
\title[The Prediction value]{The Prediction value$^*$}
\author{Maurice Koster$^a$}
\address{\textnormal{$^A$Corresponding author, CeNDEF, Amsterdam School of
Economics, University of Amsterdam, e: mkoster@uva.nl, t: ++31 20 525 4226.}}
\author{Sascha Kurz$^b$}
\address{\textnormal{$^B$Department of Mathematics, University of Bayreuth.}}
\author{Ines Lindner$^c$}
\address{\textnormal{$^C$Department of Economics \& Econometrics, VU
University, Amsterdam.}}
\author{Stefan Napel$^d$}
\address{\textnormal{$^D$Department of Economics, University of Bayreuth;
Public Choice Research Centre, Turku.}}
\date{\today }
\maketitle


{\small \noindent \textbf{Abstract:} We introduce the {prediction value}
(PV) as a measure of players' informational importance in probabilistic TU
games. The latter combine a standard TU game and a probability distribution
over the set of coalitions. Player~$i$'s prediction value equals the
difference between the conditional expectations of $v(S)$ when $i$
cooperates or not. We characterize the prediction value as a special member
of the class of (extended) values which satisfy anonymity, linearity and a
consistency property. Every $n$-player binomial semivalue coincides with the
{PV} for a particular family of probability distributions over coalitions.
The {PV} can thus be regarded as a power index in specific cases.
Conversely, some semivalues~-- including the Banzhaf but not the Shapley
value~-- can be interpreted in terms of informational importance. }

{\small \medskip }

{\small \noindent\textbf{Keywords:} influence, voting games, cooperative
games, Banzhaf value, Shapley value. }

\bigskip 

\section{Introduction}

\noindent Concepts of power and importance in models of cooperation are
central to numerous studies in sociology, political science, mathematics,
and economics. Much of the literature applies values or power indices which
attribute fixed roles~-- often perfectly symmetric~-- to all players in the
underlying coalition formation process and then focus on their \emph{%
marginal contributions}. Most prominent examples are the \emph{Shapley value}
and \emph{Banzhaf value} (Shapley 1953; Banzhaf 1965); others can be
found in \citeN{Roth:1988}, \citeN{Owen:1995}, %
\citeN{Felsenthal/Machover:1998} or \citeN{Laruelle/Valenciano:2008}.

A player who makes a positive marginal contribution, i.e., who can raise
some coalitions' worth by joining in, or lowering it by leaving, is
considered as important and powerful. Others who never affect a coalition's
worth $v(S)$ are referred to as \emph{dummy} or \emph{null players}. The
powerful ones are attributed a positive share of the decision body's
aggregate ability to implement collective decisions or to create surplus;
the indicated value concepts differ just in how marginal contributions to
distinct coalitions are weighted. For instance, the Shapley value weights a
player~$i$'s marginal contribution to a coalition $S\not\ni i$ according to
the total number of ordered divisions of the reduced player set $N\setminus
i $ into members of $S$ and its complement; the Banzhaf value weights $i$'s
marginal contributions equally for all $S\subseteq N\setminus i$.\footnote{%
We adopt the usual notational simplifications like writing $S\setminus i$
or $S\cup ij$ instead of $S\setminus \{i\}$ or $S\cup\{i,j\}$.
}

With an appropriate rescaling, weights on specific marginal contributions
can be interpreted as a probability distribution. So Shapley value, Banzhaf
value, and more generally \emph{probabilistic values} \cite{Weber:1988} correspond
to the \emph{expectation of a difference}. This difference is
between the worth of a random coalition $S$ that is drawn from $%
2^{N\setminus i}$ according to a value-specific probability distribution $%
P_{i}$ and the worth of the same coalition when $i$ joins, i.e., a
probabilistic value equals $\mathbb{E}_{P_{i}}[v(S\cup i)-v(S)]$ for a fixed
family of distributions $\{P_{i}\}_{i\in N}$.\footnote{%
More precisely, a probabilistic value draws on a \emph{family of families}
of distributions, parameterized by the player set $N$. One may equivalently
consider suitable probability distributions $P_{i}$ on $\{S\in 2^{N}\colon
i\in S\}$ and then evaluate $\mathbb{E}_{P_{i}}[v(S)-v(S\setminus i)]$.}

The expectation of a difference, however, can behave in
strange ways when the family of distributions $\{P_{i}\}_{i\in N}$ implicate
correlated voting behavior. This can be the case, for example, when voting
is preceded by a process of information transmission or opinion formation.%
 \footnote{See, for example, the seminal opinion formation model of \citeN{DeGroot:1974}:
individuals start with initial opinions (beliefs) on a subject represented by an $n$%
-dimensional vector of probabilities, and repeatedly update their individual opinion based on the current opinions of their peers. Different structures of consensus formation can be captured by different network topologies.} The following example, which we owe to Mosh\'{e} Machover,
illustrates the conceptual problem. 

\begin{example}
\label{example1} Consider the canonical simple majority decision rule with
an assembly of $5$ voters. Let $P$ be the probability distribution that
assigns probability $0$ to the $20$ coalitions containing exactly two or
exactly three voters; and equal probability of $1/12$ to each of the
remaining $12$ divisions. Here, the probabilistic value $\mathbb{E}%
_{P_{i}}[v(S\cup i)-v(S)]$ is zero for all players. That \emph{no} member of this decision body should have any voting power or importance is somewhat counterintuitive however.
\end{example}

This paper proposes an alternative approach: namely, to consider the \emph{%
difference of two expectations}. These expectations will be derived from a
given probabilistic description $P$ of coalition formation. The latter plays
a similar role as $\{P_{i}\}_{i\in N}$ for probabilistic values or
corresponding families $\{P_{i}^{v}\}_{i\in N}$ for values that evaluate
marginal contributions in game $v$-specific ways.\footnote{%
This is, for instance, the case when positive probability is only attached
to \emph{minimal winning coalitions} (see, e.g. Holler 1982 and Holler and Li 1995).} However, we take
$P$ as a primitive of the collective decision situation under investigation,
rather than of the solution concept.

We thus depart from the literature in two respects: first, we consider
\emph{probabilistic games} $(N,v,P)$ where $(N,v)$ is a standard TU game and
$P$ is a probability distribution on $N$'s power set $2^N$. Second, we
introduce a new value that reflects the difference between two conditional
expected values. Specifically, we define the
\emph{prediction value\ (PV)} of any given player $i\in N$ as the difference
in $v$'s expected value when the distribution $P|i$ which conditions $P$ on
the event $\{i\in S\}$ and the distribution $P|\neg i$ which conditions on $%
\{i\notin S\}$ are applied. In other words, we suggest to evaluate $\mathbb{E%
}_{P|i}[v(S)] - \mathbb{E}_{P|\neg i}[v(S)]$ instead of $\mathbb{E}%
_{P_i}[v(S \cup i)-v(S)]$. The two coincide in interesting special cases,
but not in general.

The difference between the respective conditional expectations can be
interpreted as the
importance of a player in the probabilistic game $(N,v,P)$ in several ways.
Most generally, it captures the informational or predictive value of knowing
$i$'s decision in advance of the process which divides $N$ into some final
coalition $S$ and its complement. Moreover, in case $i$'s membership of the
coalition which supports a specific bill or cooperates in a joint venture is
statistically independent of others, the {PV} provides a measure of $i$'s
influence on the outcome of collective decision making, or of $i$'s power in
$(N,v,P)$.

A null player who, say, has a voting weight that cannot matter for matching
a required threshold \emph{and} whose behavior is uncorrelated with the
remaining players has a {PV} of zero. Endowing the same player with greater
voting weight will at some point translate into a positive value~--
reflecting the difference that her vote can now make for the outcome.
Leaving initial voting weights unchanged, the PV will also ascribe positive
importance to the null player if interdependencies make its cooperation a
predictor of whether a proposal is passed.

Plausible causes for dependencies abound and, for instance, include the
possibility that the player in question is actually without vote but
`followed' by the official voters (as, say, their paramount or supreme leader). The
proposed change of perspective~-- from, traditionally, the expected
difference that a player would make by an ad-hoc change of coalition
membership towards the difference in expectations for the collective outcome
which is associated with that player's cooperation~-- opens the route to
studying voting and coalition formation as the result of social interaction.
Final votes may be determined by whether $i$ is initially a supporter or
opponent even if $i$ is a null player of $(N,v)$, and this is arguably a
source of power just like official voting weight. We believe that evaluating
changes in conditional expectations can help to quantify this in future
research.

Here, we primarily want to introduce and investigate the {prediction value}.
We formally define it in Section~\ref{sec:definition}.
We describe a set of
characteristic properties in Section~\ref{sec:characterization} and relate
the {PV} to traditional probabilistic values in Section~\ref{sec:relation_to_prob_values}.
The considered distributions $P$ could embody the \emph{a~prioristic}
presumptions of traditional power measures, i.e., be the uniform
distribution on $2^N$ or the space of permutations on $N$. (Interestingly,
the latter does \emph{not} make {PV} and Shapley value coincide.) But $P$
could equally well be based on empirical data~-- say, observations of past
voting behavior in a decision making body like the US~Congress, EU~Council
of Ministers, etc. We briefly conduct such \emph{a~posteriori analysis} with
the PV in an application to the Dutch Parliament in Section~\ref{sec: DP}
and conclude in Section~\ref{sec:conclusion}.


\section{Probabilistic games and the {prediction value}}

\label{sec:definition} \noindent A \emph{TU game} is an ordered pair $(N,v)$
where $N\subset \mathbb{N}$ represents a non-empty, finite set of players
and $v\colon 2^N\rightarrow \mathbb{R} $ is the characteristic function
which specifies the worth $v(S)$ of any subset or coalition $S\subseteq N$
and satisfies $v(\varnothing)=0$. The set of all TU games is denoted by $%
\mathcal{G}$, and the set of all TU games with player set $N$ by $\mathcal{G}%
^N$. The cardinality of a finite set $S\subset \mathbb{N}$ is denoted $|S|$.

$(N,v)\in\mathcal{G}$ is a \emph{simple game} if $v$ is a monotone Boolean
function, i.e., $v(S)\le v(S^{\prime })$ for all $S\subseteq S^{\prime
}\subseteq N$, such that $v(\varnothing)=0$ and $v(N)=1$. Given any
non-empty coalition $S\subseteq N$, the so-called \emph{unanimity game} $u_S$
is defined by $u_S(T)=1$ if $S\subseteq T$ and $u_S(T)=0$ otherwise. Note
that we will drop the player set $N$ from our notation when it is clear from
the context; so $u_S$ is shorthand for $(N,u_S)$. Moreover, we refer to $%
u_{\{i\}}$ simply as $u_i$.

A \emph{probabilistic game} is an ordered triple $(N,v,P)$, where $(N,v)$ is
a TU game and $P$ is a probability distribution on the power set of $N$, $%
2^N $. The set of all probabilistic games is denoted by $\mathcal{PG}$; and $%
\mathcal{PG}^N$ is the restriction to the class of probabilistic games with
player set $N$.

A \emph{TU value} is a function which assigns a real number to all elements
of $N$ for any given TU game. An \emph{extended value} is a mapping $\varphi$
that assigns to each probabilistic game $(N,v,P)$ a vector $%
\varphi(N,v,P)\in \mathbb{R} ^{|N|}$. $\varphi_i(N,v,P)$ will be interpreted
as a measure of the `difference', in an abstract sense, that player~$i$
makes for the probabilistic game $(N,v,P)$. It might, for instance, relate
to the average of marginal contributions $v(S\cup i)-v(S)$ that are made by $%
i$ to coalitions $S\in N\setminus i$, to the difference that $i$ makes to a
potential function (i.e., a mapping from $\mathcal{PG}$ to $\mathbb{R} $)
when $i$ is added to the player set $N^{\prime }$ such that $N^{\prime }\cup
i=N$, or to any other indicator of how important the behavior or presence of
player~$i$ might be to the members of $N$ or an outside observer.

TU values and {extended value}s are defined on two distinct domains, $%
\mathcal{G}$ and $\mathcal{PG}$. {Extended value}s can be regarded as
technically the more general concept because any given TU value can be
turned into an {extended value} simply by ignoring the distribution $P$ that
is specified as part of probabilistic game $(N,v,P)$. For instance, the in
this way `generalized' Shapley value is defined by\footnote{%
When the considered set of players $N$ is clear from the context, we
simplify notation by writing $\sum_{S\not\ni i}$ instead of $%
\sum_{S\subseteq N:\, i\notin S}$, or $\sum_{S\ni i}$ instead of $%
\sum_{S\subseteq N:\, i\in S}$.}
\begin{equation}
\varphi_i(N,v,P)=\sum_{S\not\ni i}\frac{|S|!(|N|-|S|-1)!}{|N|!}(v(S\cup
i)-v(S)).
\end{equation}
and similarly the (generalized) Banzhaf value can be defined by
\begin{equation}
\beta_i(N,v,P)=\frac{1}{2^{n-1}}\sum_{S\not\ni i}(v(S\cup i)-v(S)).
\end{equation}

Both the original Shapley TU value and the Banzhaf TU value (which was at
first restricted to simple games, and later extended to general TU~games by \citeNP{Owen:1975:Banzhaf}) are
special instances of \emph{probabilistic values}, as introduced by \citeN{Weber:1988}, with either
\begin{equation}  \label{def:prob_value}
\Psi_i(N,v,Q)=\sum_{S\ni i}Q_i(S)(v(S)-v(S\setminus
i))=\mathbb{E}_{Q_i}[v(S)-v(S\setminus i)]\end{equation}
such that each element $Q_i$ of the collection $Q=\{Q_i\}_{i\in N}$ denotes a probability distribution on $%
\{S\subseteq 2^N\colon i\in S\}$, or
\begin{equation} \label{def:prob_value2}
\Psi_i(N,v,Q^{\prime })=\sum_{S\not\ni i}Q^{\prime }_i(S)(v(S\cup i)-v(S))=\mathbb{E}_{Q^{\prime }_i}[v(S\cup
i)-v(S)]
\end{equation}
such that $Q^{\prime }_i$ denotes a probability distribution on $2^{N\setminus i}$. For instance, %
\citeN{Laruelle/Valenciano:2005} have proposed two probabilistic values, $%
\Phi^+$ and $\Phi^{-}$, which respectively take $Q_i(S)$ and $Q_i^{\prime
}(S)$ to denote the probability of coalition $S$ being realized conditional
on $i$ voting \emph{no} and conditional on $i$ voting \emph{yes}.

For a given probabilistic game $(N,v,P)$ this suggests to work with the conditional probability distributions $%
P|i$ and $P|\neg i$ as follows: for all $S\subseteq N$
\begin{equation}
P| i(S) =
\begin{cases}
\frac{P(S)}{\sum\limits_{T\ni i}P(T)} & \text{if $i\in S$ and $%
\sum\limits_{T\ni i}P(T)\neq0$,} \\
0 & \text{otherwise,}%
\end{cases}%
\end{equation}
and, similarly,
\begin{equation}
P|\neg i(S) =
\begin{cases}
\frac{P(S)}{\sum\limits_{T\not\ni i}P(T)} & \text{if $i\notin S$ and $%
\sum\limits_{T\not\ni i}P(T)\neq0$,} \\
0 & \text{otherwise.}%
\end{cases}%
\end{equation}
One might then consider
\begin{equation}  \label{eq:Phi+}
\Phi^+_i(N,v,P)=\mathbb{E}_{P|i}[v(S)- v(S\setminus i)]
\end{equation}
and
\begin{equation}  \label{eq:Phi-}
\Phi^-_i(N,v,P)=\mathbb{E}_{P|\neg i}[v(S\cup i)- v(S)]
\end{equation}
as the natural extensions to Laruelle and Valenciano's conditional
decisiveness measures to domain $\mathcal{PG}$. Note that $%
\Phi^+(N,v,P)=\Phi^-(N,v,P)=\beta(N,v,P)$ if and only if $P(S)\equiv
2^{-|N|} $. One can similarly obtain identity with the (extended) Shapley
value: namely
\begin{eqnarray}  
&&\Phi^+(N,v,P)=\varphi(N,v,P) \nonumber\\\ \Longleftrightarrow \ &&P(\varnothing)=0 \text{
and } P(S)=\frac{1}{s \binom{n}{s} \sum_{t=1}^{n}\frac{1}{t}} \text{if $%
S\neq \varnothing$,}\label{eq:SV_as_Phi+}
\end{eqnarray}
and
\begin{eqnarray}
&&\Phi^-(N,v,P)=\varphi(N,v,P) \nonumber\\\ \Longleftrightarrow \ &&P(N)=0 \text{ and }
P(S)=\frac{1}{(n-s) \binom{n}{s} \sum_{t=1}^{n}\frac{1}{t}} \text{if $S\neq
N $,}\label{eq:SV_as_Phi-}
\end{eqnarray}
with $n=|N|$ and $s=|S|$ (see \citeNP[Prop. 3]{Laruelle/Valenciano:2005}).%
\footnote{%
Note that, as emphasized by Laruelle and Valenciano, the respective
distribution $P$ which needs to be assumed in order to obtain the Shapley
value as the expected marginal contribution conditional on $i$ being a
member of the random coalition $S$ and, alternatively, conditional on $i$
not being a member, \emph{differ}.}

We, however, suggest an altogether different approach to assessing the
importance of $N$'s members in a probabilistic games $(N,v,P)$. It is \emph{%
not} based on probabilistic values, nor marginal contributions in general.

The reason why weighted marginal contributions may misrepresent $(N,v,P)$ is that they
implicitly treat $i$'s decision, say, to change her \emph{no} vote into a
\emph{yes} (or vice versa) as being fully detached from the respective
probabilities of observing the considered two coalitions with and without $i$. Example
\ref{example1} already highlighted the effect that non-zero marginal contributions
$v(S\cup i)-v(S)>0$ simply don't count at all when the underlying probability distribution
$P$ treats both events $S\cup i$ and $S$ as null events. Adding up weighted marginal contributions also leads to strange conclusions if only one of the coalitions $S$ and $S\cup i$ has positive probability, as
illustrated in the following example.

\begin{example}\label{example2}
Consider 
an assembly of $3$ voters in which coalitions $\{1,3\}$, $\{2,3\}$ and $\{1,2,3\}$ are winning. Assume voters $2$ and $3$ are enemies and always vote contrary
to each other. Here, coalition $S=\{1,2\}$ might have positive probability under $P$ and
$P|\neg 3$, while $P(N)=0$. The problem with measures like $\Phi^-(N,v,P)$ is then that they are
strictly increased by a contribution which $3$ makes in the null event of joining $S=\{1,2\}$.
\end{example}

One thing that outside observers, members $j\neq i$ of $N$, or $i$ herself
might still care about is the informational gain that comes with the
knowledge: ``$i$ will (not) be part of the eventually formed coalition''.
Knowing this might imply that $j$ cannot (or must) be amongst the members of
the coalition. And it may have ramifications for the expected surplus that
is created or the passage probability of the bill being debated. In other
words, it may be useful to base one's evaluation of collective decision
making as described by $(N,v,P)$ on $P|i$ rather than $P$ when $i$ is known
to support the decision. This suggests looking at the difference $\mathbb{E}%
_{P|i}[v(S)]- \mathbb{E}_{P}[v(S)]$ as a way of quantifying $i$'s effect on
the outcome. And, of course, it is of similar interest~-- and may yield a
rather different quantification of the difference that $i$'s decision
makes~-- not to look at how much $i$'s support increases the expected worth $%
v(S)$ but at how much $i$'s opposition lowers it, i.e., $\mathbb{E}%
_{P}[v(S)]- \mathbb{E}_{P|\neg i}[v(S)]$. Combining these two evaluations of
how knowledge of $i$'s decision changes the expectation of the game by
summing them, we obtain:

\begin{definition}
\label{def:cpi} The \emph{prediction value~(PV)} of player~$i$ in the
probabilistic game $(N,v,P)$ is defined as
\begin{eqnarray}
\xi_i(N,v,P)&=&\mathbb{E}_{P|i}[v(S)]- \mathbb{E}_{P|\neg i}[v(S)]
\label{eq_cpi} \\
&=& \sum\limits_{S\ni i} v(S)\cdot P\!\!\mid\!\! i(S)- \sum\limits_{T\not\ni
i} v(T)\cdot P|\neg i(T).  \notag
\end{eqnarray}
\end{definition}


\begin{example}[Example \protect\ref{example1} revisited]
Consider again the canonical simple majority decision rule with an assembly
of 5 voters with $P(S)=0$ for $|S|=2$ or $|S|=3$ and $P(S)=1/12$ otherwise.
The conditional probabilities are given by
\begin{equation}
P|i(S)=%
\begin{cases}
\frac{1}{6} & \text{if $S=\{i\}\text{ or } S=N\backslash j,j\neq i \text{ or }S=N$,} \\
0 & \text{otherwise,}%
\end{cases}%
\end{equation}%
and, similarly,
\begin{equation}
P|\lnot i(S)=%
\begin{cases}
\frac{1}{6} & \text{if $S=$}\varnothing \text{ or }S=\{j\},j\neq i\text{ or }S=N%
\backslash i, \\
0 & \text{otherwise.}%
\end{cases}%
\end{equation}%
The prediction value follows as
\begin{eqnarray}
\xi _{i}(N,v,P) &=&\mathbb{E}_{P|i}[v(S)]-\mathbb{E}_{P|\lnot i}[v(S)] \\
&=&\sum\limits_{S\ni i}v(S)\cdot P\!\!\mid \!\!i(S)-\sum\limits_{T\not\ni
i}v(T)\cdot P|\lnot i(T)  \notag \\
&=&\frac{5}{6}-\frac{1}{6}=\frac{2}{3}.
\end{eqnarray}
\end{example}

\begin{remark}\label{remark1}
In case that coalition membership is statistically independent for every $%
i\neq j$, i.e., if $P$ is a product measure on $2^N$, the equality $%
P\!\!\mid\!\! i(S)=P\!\!\mid\!\! \neg i(S\setminus i)$ holds whenever $i\in
S $. Then equations (\ref{eq:Phi+}), (\ref{eq:Phi-}), and (\ref{eq_cpi}) all
evaluate to the same number~-- to the Banzhaf value, for instance, if $%
P(S)\equiv 2^{-|N|}$. That the ``expectation of a difference'' in (\ref{eq:Phi+})
or (\ref{eq:Phi-}) coincides with the ``difference between two
expectations'' in (\ref{eq_cpi}), however, fails to hold in general. In particular, we will
show in Corollary~\ref{cor2} that there is no probability distribution $P$ which would allow the Shapley value to be
interpreted as measuring informational importance.
\end{remark}

\section{Characterizing the {prediction value}\label{sec:characterization}}

This section provides an axiomatic characterization of the prediction value.
We begin with two classical conditions that are part of many axiomatic systems
in the literature on TU values. The first is \emph{anonymity},
which requires that the indicated difference to the game that is ascribed to
any player by an {extended value} does not depend on the labeling of the
players. The second is \emph{linearity}, which demands of an {extended value}
that it is linear in the characteristic function component $v$ of
probabilistic games.

\begin{definition}
Consider two probabilistic games $G=(N,v,P)$ and $G^{\prime }=(N^{\prime
},v^{\prime },P^{\prime })$ related through a bijection $\pi\colon
N\rightarrow N^{\prime }$ such that for all $S\subseteq N$, $v(S)=v^{\prime
}(\pi S)$ and $P(S)=P^{\prime }(\pi S)$ where $\pi S:=\{\pi(i)|i\in S\}$. An
{extended value} $\varphi$ is \emph{anonymous} if for every such $G$ and $%
G^{\prime }\in \mathcal{PG}$
\begin{equation}
\varphi_i(N,v,P)=\varphi_{\pi(i)}(N^{\prime },v^{\prime },P^{\prime }) \text{
for all }i\in N.
\end{equation}
\end{definition}

\begin{definition}
An {extended value} $\varphi$ is \emph{linear} if for all $%
(N,v,P),(N,v^{\prime },P)\in\mathcal{PG}$ and real constants $\alpha,\beta$
\begin{equation}
\varphi \left(N, \alpha v+\beta v^{\prime },P\right) =\alpha \varphi
\left(N, v,P\right) +\beta \varphi \left(N, v^{\prime },P\right) .
\end{equation}
\end{definition}

Linearity combines two properties, \emph{scale invariance} and \emph{additivity}. Especially the latter is far from being innocuous.\footnote{See, e.g., \citeN[6.2.26]{Felsenthal/Machover:1998} and \citeN[p.~248]{Luce/Raiffa:1957}.} But linearity is frequently imposed on solution concepts for TU games; and the PV, as the difference of two expectations, embraces it rather naturally.

The third characteristic property of the {PV} concerns the way how the
respective {extended value}s of two games $G$ and $G^{\prime }$ compare when
one can be viewed as a reduced form of the other. We first formalize this
reduction relation between two games, and afterwards define a consistency
property which connects the {extended value}s of correspondingly related
games.

\begin{definition}
\label{def:dependent_and_reduced_game} Call player $i\in N$ \emph{dependent}
in $(N,v,P)$ (or simply in $v$) if $v(i) =0.$ Given $G=(N,v,P)\in\mathcal{PG}
$ and a dependent player $i\in N$, the probabilistic game $%
G_{-i}=(N_{-i},v_{-i},P_{-i})\in\mathcal{PG}$ is a \emph{reduced game}
derived from $G$ by removal of $i$ if
\begin{align}  \label{eq vmini}
N_{-i}&=N\setminus i, \\
\noalign{\vskip6pt} P_{-i}(S)&=P(S)+P(S\cup i) \text{ for all $S\subseteq
N\setminus i$} \text{, and} \\
v_{-i}\left( S\right) &=
\begin{cases}
\frac{P(S)}{P(S)+P(S\cup i)}\cdot v( S) +\frac{P(S\cup i)}{P(S)+P(S\cup i)}
v( S\cup i) & \text{if }P_{-i}(S) >0,\footnotemark  \\
0 & \text{otherwise}.%
\end{cases}%
\end{align}
\end{definition}

\footnotetext{%
Note that if $i$ were not a dependent player, i.e., $v(i)\neq 0$, then $%
v_{-i}$ would not be a properly defined TU game because $v_{-i}(\varnothing)%
\neq 0$ in this case.} So, when one moves from a given probabilistic game $G$
to the reduced game $G_{-i}$, first, player~$i$ is removed from the set of
players; second, the probabilities of all coalitions in $G$ which only
differ concerning $i$'s presence are aggregated; and, third, the
corresponding new worth $v_{-i}(S)$ of coalitions $S\subseteq N_{-i}$ is the
convex combination of the associated old worths, $v(S)$ and $v(S\cup i)$,
weighted according to their respective probabilities under $P$. The
following property requires that the {extended value} of any player~$j\in
N_{-i} $ stays unaffected by the removal of $i$.\footnote{%
The condition is vaguely reminiscent of the \emph{amalgamation} properties
considered by \citeN{Lehrer:1988} or \citeN{Casajus:2012}.}

\begin{definition}
An {extended value} $\varphi$ is \emph{consistent} if for all $G=(N,v,P)\in%
\mathcal{PG}$ and all dependent players $i\in N$ in $v$, we have $%
\varphi_j(G)=\varphi_j(G_{-i})$
for all $j\in N\setminus i$.
\end{definition}

One reason for why this consistency property could be desirable is the following. Suppose that the considered model is misspecified in the sense that
a player of interest in the game is not taken into account by the rest of
the players (or an outside observer). For instance, consider the situation
of a voting game $G^{\prime }=(N^{\prime }, v^{\prime }, P^{\prime })$,
where the presence of a lobbyist~$i$ has been neglected. The more accurate
model would include the lobbyist and be $G=(N^{\prime }\cup i, v, P)$. The
effect of the lobbyist endorsing a proposal or opposing it would explicitly
be captured by the probability distribution $P$: for example, voters with
strong ties to $i$ may be likely to vote the same way, while others behave
oppositely. Coalitions $S$ and $S\cup i$ which differ only in $i$'s presence
will consequently have very different $P$-probabilities depending on whether
$S$ includes $i$'s fellow travelers or opponents. But if the probability $%
P^{\prime }$ and value $v^{\prime }$ of each coalition $T\subseteq N^{\prime
}$ in the `misspecified' game without $i$ are defined in a
probabilistically correct way, i.e., if the misspecified game $G^{\prime }$
equals $G_{-i}$, then the assessment of any actor $j\neq i$ should be
unaffected by whether one considers $G$ or $G_{-i}$.

Consistency can thus be seen as formalizing robustness to probabilistically
correct misspecifications.

\begin{proposition}\label{propALC}
The {prediction value} is anonymous, linear, and consistent.
\end{proposition}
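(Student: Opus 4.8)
The plan is to verify each of the three properties separately, since they are logically independent and the statement is a conjunction. Anonymity and linearity should be essentially immediate from the definition $\xi_i(N,v,P)=\mathbb{E}_{P|i}[v(S)]-\mathbb{E}_{P|\neg i}[v(S)]$, while consistency is the one requiring a genuine computation.

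\medskip
\noindent\textbf{Anonymity.} Given $G=(N,v,P)$ and $G'=(N',v',P')$ related by a bijection $\pi\colon N\to N'$ with $v(S)=v'(\pi S)$ and $P(S)=P'(\pi S)$, I would first observe that the denominators match: $\sum_{T\ni i}P(T)=\sum_{T'\ni \pi(i)}P'(T')$ because $T\mapsto \pi T$ is a bijection between $\{T\colon i\in T\}$ and $\{T'\colon \pi(i)\in T'\}$, and likewise for the complement sums. Hence $P|i(S)=P'|\pi(i)(\pi S)$ for every $S$, and similarly $P|\neg i(S)=P'|\neg\pi(i)(\pi S)$. Summing $v(S)\cdot P|i(S)$ over $S\ni i$ then equals summing $v'(\pi S)\cdot P'|\pi(i)(\pi S)$ over the same index set, i.e.\ $\sum_{S'\ni\pi(i)}v'(S')\cdot P'|\pi(i)(S')$; the same reindexing handles the $P|\neg i$ term, so $\xi_i(G)=\xi_{\pi(i)}(G')$.

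\medskip
\noindent\textbf{Linearity.} Here the key point is that the conditional distributions $P|i$ and $P|\neg i$ depend only on $P$, not on $v$. So for fixed $(N,P)$ the map $v\mapsto \xi_i(N,v,P)$ is, by~(\ref{eq_cpi}), a fixed linear combination $\sum_{S\ni i}P|i(S)\,v(S)-\sum_{T\not\ni i}P|\neg i(T)\,v(T)$ of the values $v(S)$, hence linear in $v$. Replacing $v$ by $\alpha v+\beta v'$ and splitting the sum gives $\alpha\xi_i(N,v,P)+\beta\xi_i(N,v',P)$ directly.

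\medskip
\noindent\textbf{Consistency.} This is the main obstacle, though still a finite computation. Fix $G=(N,v,P)$, a dependent player $i$ (so $v(i)=0$), and $j\in N\setminus i$; I want $\xi_j(G)=\xi_j(G_{-i})$. The plan is to show the two conditional expectations defining $\xi_j$ are individually preserved. Consider $\mathbb{E}_{P|j}[v(S)]=\big(\sum_{S\ni j}P(S)v(S)\big)\big/\big(\sum_{S\ni j}P(S)\big)$ (assuming the denominator is nonzero; the degenerate case is trivial). Partition the coalitions $S\ni j$ in $N$ according to whether they contain $i$: each $S\subseteq N_{-i}$ with $j\in S$ corresponds to the pair $\{S,\,S\cup i\}$. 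In the reduced game the numerator contribution of this pair is $P_{-i}(S)\,v_{-i}(S)=P(S)v(S)+P(S\cup i)v(S\cup i)$ by the very definition of $v_{-i}$ and $P_{-i}$ (the weights in the convex combination cancel the denominator $P_{-i}(S)$, and when $P_{-i}(S)=0$ both old terms vanish too), and the denominator contribution is $P_{-i}(S)=P(S)+P(S\cup i)$. Summing over all $S\subseteq N_{-i}$ with $j\in S$ reproduces exactly $\sum_{S\ni j}P(S)v(S)$ in the numerator and $\sum_{S\ni j}P(S)$ in the denominator of the original game. Hence $\mathbb{E}_{P|j}[v(S)]=\mathbb{E}_{P_{-i}|j}[v_{-i}(S)]$. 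The identical argument, now partitioning coalitions $T\not\ni j$ by presence of $i$, gives $\mathbb{E}_{P|\neg j}[v(S)]=\mathbb{E}_{P_{-i}|\neg j}[v_{-i}(S)]$. Subtracting yields $\xi_j(G)=\xi_j(G_{-i})$. I expect the only subtlety to be careful bookkeeping of the zero-probability cases in the definition of $v_{-i}$, but since $P(S)v(S)+P(S\cup i)v(S\cup i)=0$ whenever $P_{-i}(S)=0$, the aggregated numerator and denominator match term by term regardless, so the argument goes through cleanly.
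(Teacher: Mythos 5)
Your proposal is correct and follows essentially the same route as the paper: anonymity and linearity are read off the definition, and consistency is obtained by pairing each coalition $S\subseteq N_{-i}$ with $S\cup i$ and using the fact that the convex-combination weights in $v_{-i}$ cancel against $P_{-i}(S)$, so that $P_{-i}(S)v_{-i}(S)=P(S)v(S)+P(S\cup i)v(S\cup i)$. The only cosmetic difference is that you normalize by the denominator $\sum_{T\ni j}P(T)$ after aggregating rather than working with the conditional probabilities $P|j$, $P|\neg j$ throughout as the paper does; the two bookkeeping schemes are equivalent, and your handling of the zero-probability cases is sound.
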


\begin{proof}
Anonymity and linearity of $\xi$ are obvious from Definition~\ref{def:cpi}.
To prove consistency, consider $(N,v,P)\in\mathcal{PG}$ and let $i\in N$ be
dependent in $v$. Let $j\in N\setminus i$ and $S\subseteq N\setminus ij$. In
case $\sum\limits_{T\subseteq N\setminus i:\, j\in
T}P_{-i}(T)=\sum\limits_{T\subseteq N:\, j\in T}P(T)\neq 0$ we can compute
\begin{eqnarray*}
P_{-i}|j\left( S\cup j\right) &=&\frac{P_{-i}\left( S\cup j\right)}{%
\sum_{T\subseteq N\setminus i:\,T\ni j}P_{-i}\left( T\right) } =\frac{%
P\left( S\cup j\right) +P\left( S\cup i\cup j\right) }{\sum_{T\subseteq
N\setminus i:\,T\ni j}\left\{ P\left( T\right) +P\left( T\cup i\right)
\right\} }  \notag \\
&&=\frac{P\left( S\cup j\right) +P\left( S\cup i\cup j\right) }{\sum_{T
\subseteq N:\, T \ni j}P\left( T\right) }=P|j\left( S\cup j\right)
+P|j\left( S\cup i\cup j\right).
\end{eqnarray*}
In the alternative case, both sides are zero by definition. So in either
case
\begin{equation}
P_{-i}|j\left( S\cup j\right)= P|j\left( S\cup j\right) +P|j\left( S\cup
i\cup j\right).  \label{eq pmini}
\end{equation}
Analogously, one can check that
\begin{equation}
P_{-i}|\lnot j\left( S\right)= P|\lnot j\left( S\right) +P|\lnot j\left(
S\cup i\right)  \label{eq pmini2}
\end{equation}
when $S\subseteq N\setminus ij$. By using the definition of $v_{-i}$ and
invoking equality $(\ref{eq pmini})$ one can verify that
\begin{equation}
P_{-i}|j\left( S\cup j\right) v_{-i}\left( S\cup j\right) = P|j\left(
S\cup i\cup j\right) v\left( S\cup i\cup j\right) + P|j\left( S\cup j\right)
v\left( S\cup j\right).  \label{eq cons0}
\end{equation}%
Similarly, by definition of $v_{-i}$ together with $(\ref{eq pmini2})$, we
get%
\begin{equation}  \label{eq cons1}
P_{-i}|\lnot j\left( S\right) v_{-i}\left( S\right) = P|\lnot j\left( S\cup
i\right) v\left( S\cup i\right) + P|\lnot j\left( S\right) v\left( S\right).
\end{equation}
One can then infer
\begin{align*}
\xi_j(N_{-i},v_{-i},P_{-i}) &=\sum_{S\subseteq N\setminus ij}\big\{ %
P_{-i}|j\left( S\cup j\right) v_{-i}\left( S\cup j\right) - P_{-i}|\lnot
j\left( S\right) v_{-i}\left( S\right) \big\} \\
&=\sum_{S\subseteq N\setminus ij}\Big[ \big\{ P|j\left( S\cup i\cup j\right)
v\left( S\cup i\cup j\right) +P|j\left( S\cup j\right) v\left( S\cup
j\right) \big\}\\
& \qquad\qquad\quad -\big\{ P|\lnot j\left( S\cup i\right) v\left( S\cup
i\right) +P|\lnot j\left( S\right) v\left( S\right) \big\} \Big] \\
&=\sum_{S\subseteq N\setminus j}\big\{ P|j\left( S\cup j\right) v\left(
S\cup j\right) -P|\lnot j\left( S\right) v\left( S\right) \big\} \\
&= \xi_j(N,v,P),
\end{align*}%
where the second equality uses $(\ref{eq cons0})$ and $(\ref{eq cons1})$,
and the third one follows by shifting the corresponding terms from inside
the square brackets to the outer summation.
\end{proof}

Proposition \ref{propALC} is not enough to fully characterize the PV. For example,
$\Phi_i^+(N,v,P)$ satisfies anonymity, linearity and consistency, too (see Lemma~\ref{nonredundant} below). Theorem~\ref{thm:main} will provide a unique
characterization of the PV. In a nutshell, the underlying argument will be as follows: if extended values are linear and consistent, they are determined by their image for the subclass of $2$-player probabilistic games. It is then a question of how the extended values of
2-player probabilistic games should suitably be restricted.

For that purpose it is worth recalling two implications
of $i$ being part of the formed coalition: first, $i$'s presence means that $i$
contributes to the formed
coalition her voting weight, productivity, etc. This reveals information
about the expected worth directly. But, second, $i$'s presence also affects
the expected worth indirectly because it reveals information about the
presence and contributions of other players, at least if the behavior of $%
N\setminus i$ and of $i$ are not statistically independent. In case of
independence, i.e., if the presence of $i\in N$ presence has \emph{no}
informational value according to $P$, and if moreover $i$ is a null player
in the TU-game $(N,v)$, then a reasonable {extended value} can be expected
to assign zero to $i$. If, in contrast, knowledge of the behavior of null
player $i$ does change the odds of a proposal being passed, then $i$ has
positive informational value.

For illustration, consider a voting game in which $j$ is a dictator
according to the rules formalized by $v$ (i.e., $v(S)=1\Leftrightarrow j\in
S $). Let the voting behavior of $j$ be perfectly correlated with that of
some other player~$i$ (formally a null player). Now note that it is not part
of the model $(N,v,P)$, which mathematically describes the rules of the
collective decision body involving $i$ and $j$ and the random \emph{outcomes}
of coalition formation processes, \emph{why} the votes of $i$ and $j$ always
coincide. `Null player'~$i$ might simply follow `dictator'~$j$ in all
his decisions. Alternatively, player~$i$ could be irrelevant merely from a
formal perspective, i.e., have no say \emph{de jure}; while it is her who
imposes all her wishes on $j$~-- that is, she rules \emph{de facto}. In
either case the informational values of $i$ and $j$ are identical. They are
also maximal (and could plausibly be normalized to, say, 1) in the sense
that the outcome can be predicted perfectly when knowing that $i$ or $j$
votes \emph{yes} or \emph{no}.

We combine the requirement that an independent null player $i$ should be
assigned an {extended value} of zero with the requirement that $i$ has a
value of one in the considered perfect correlation case as follows:\footnote{%
The case of independence corresponds to $P|i(ij)=P|\neg i(j)$, while the
correlated dictator case amounts to $P|i(ij)=1$ and $P|\neg i(j)=0$.}

\begin{definition}\label{defIDDP}
An {extended value} $\varphi $ satisfies the \emph{informational
dummy-dictator property (IDDP)} if for $i\in N$ and $|N|=2$
\begin{equation}
\varphi _{i}(\{i,j\},u_{j},P)=P|i(ij)-P|\lnot i(j).
\end{equation}
\end{definition}


Regarding dictators themselves it makes sense to impose the following for
$1$-player probabilistic games:

\begin{definition}\label{defFULLCONTROL}
An {extended value} $\varphi$ satisfies \emph{full control} if $%
\varphi_i(\{i\},u_i,P)=1$ for all $i,P$ where $P(\{i\})>0$,
and $\varphi_i(\{i\},u_i,P)=0$ otherwise.
\end{definition}
This formalizes that if $N$ consists of just a single player~$i\in
\mathbb{N} $ with $v(i)=u_i(i)=1$ then $i$'s importance or the difference
that $i$ makes to this game should plausibly be evaluated as unity.\footnote{%
One might actually debate whether this should also be required in case that $%
P(\varnothing)=1$.} Immediately from the definition of the PV we obtain

\begin{proposition}\label{propFC_IDDP}
The {prediction value} satisfies full control and (IDDP).
\end{proposition}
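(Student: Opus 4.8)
The plan is to verify the two claimed properties directly from the definitions, since both are statements about probabilistic games with very small player sets where the prediction value can be computed explicitly. No deep machinery is needed; the only subtlety is bookkeeping with the conditional distributions $P|i$ and $P|\lnot i$ in the degenerate cases where one of the conditioning events has probability zero.

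First I would treat \emph{full control}. Here $N=\{i\}$, so $2^N=\{\varnothing,\{i\}\}$, and $v=u_i$ satisfies $u_i(\varnothing)=0$, $u_i(\{i\})=1$. If $P(\{i\})>0$, then $P|i$ is well defined and puts all mass on $\{i\}$, so $\mathbb{E}_{P|i}[v(S)]=v(\{i\})=1$; meanwhile the sum $\sum_{T\not\ni i}v(T)\cdot P|\lnot i(T)$ in Definition~\ref{def:cpi} runs only over $T=\varnothing$, where $v(\varnothing)=0$, so it vanishes regardless of whether $P|\lnot i$ is defined. Hence $\xi_i(\{i\},u_i,P)=1-0=1$. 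If instead $P(\{i\})=0$, then $P(\varnothing)=1$, the event $\{i\in S\}$ is null, so by the convention in Definition~\ref{def:cpi} every term $v(S)\cdot P|i(S)=0$, and likewise the second sum contributes only $v(\varnothing)=0$; thus $\xi_i(\{i\},u_i,P)=0$. This matches Definition~\ref{defFULLCONTROL}.

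Next I would check (IDDP). Now $N=\{i,j\}$ and $v=u_j$, so $u_j(S)=1$ exactly when $j\in S$, i.e.\ on $S=\{j\}$ and $S=\{i,j\}$, and $u_j(\varnothing)=u_j(\{i\})=0$. Plugging into Definition~\ref{def:cpi}, the first sum $\sum_{S\ni i}u_j(S)\,P|i(S)$ has nonzero summand only at $S=\{i,j\}$, giving $P|i(\{i,j\})=P|i(ij)$; the second sum $\sum_{T\not\ni i}u_j(T)\,P|\lnot i(T)$ has nonzero summand only at $T=\{j\}$, giving $P|\lnot i(\{j\})=P|\lnot i(j)$. Therefore $\xi_i(\{i,j\},u_j,P)=P|i(ij)-P|\lnot i(j)$, which is exactly the right-hand side in Definition~\ref{defIDDP}. (Should $\{i\in S\}$ or $\{i\notin S\}$ be a null event, the corresponding term is $0$ by the convention in the definitions of $P|i$ and $P|\lnot i$, and the formula still holds because the right-hand side is read with the same convention.)

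There is essentially no hard part: the statement is a one-line consequence of Definition~\ref{def:cpi} once one restricts attention to the $1$- and $2$-player games and notes which coalitions carry nonzero worth under $u_i$ and $u_j$. The only point demanding a moment's care — and the one I would make explicit in the writeup — is the treatment of the degenerate conditioning events, so that the identities are seen to hold unconditionally and not merely when both $\sum_{T\ni i}P(T)$ and $\sum_{T\not\ni i}P(T)$ are positive.
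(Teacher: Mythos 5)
Your verification is correct and is precisely the computation the paper has in mind: it states Proposition~\ref{propFC_IDDP} follows ``immediately from the definition of the PV'' and omits the details, which you have simply written out, including the careful handling of the null-event conventions for $P|i$ and $P|\lnot i$. No discrepancy with the paper's (implicit) argument.
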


\begin{remark}\label{remark2}
We remark that (IDDP) implies a positive {extended value} for a null player~$%
i$ even if $i$'s behavior is imperfectly but still positively correlated
with that of a dictator $j$. This is, e.g., the case when a \emph{yes}-vote
by $i$ is made more likely by most other players voting \emph{yes}, i.e.,
for the implicit probabilistic model behind the Shapley value. For a
probabilistic game with a dictator where $P$ reflects any Shapley value-like
probabilistic assumptions, this means that {PV} and Shapley value $\varphi$
will \emph{not} coincide: the Shapley value satisfies the traditional \emph{%
null player axiom}, i.e., it assigns zero to any player~$i$ who does not directly
affect the worth of any coalition $S$.
\end{remark}

We have the following characterization result:
\begin{theorem}
\label{thm:main} There is a unique extended value $\varphi $ which satisfies linearity, consistency, full control and (IDDP). It is anonymous and $\varphi\equiv
\xi$.
\end{theorem}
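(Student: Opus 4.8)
The plan: existence is in hand, since Propositions~\ref{propALC} and~\ref{propFC_IDDP} already establish that $\xi$ is anonymous, linear, consistent, and satisfies full control and (IDDP). It remains to prove uniqueness, i.e.\ that any extended value $\varphi$ satisfying linearity, consistency, full control and (IDDP) equals $\xi$; anonymity of $\varphi$ then follows from Proposition~\ref{propALC}. I would argue by induction on $n=|N|$. Two observations do the work: first, by linearity it suffices to compare $\varphi$ and $\xi$ on probabilistic games whose underlying TU game is a unanimity game $u_T$, using the expansion $v=\sum_{\varnothing\neq T\subseteq N}c_T\,u_T$ in the basis of $\mathcal{G}^N$; second, consistency---enjoyed by both $\varphi$ and $\xi$---lets one strip a dependent player $i$ from $(N,u_T,P)$ to a probabilistic game on $N\setminus i$ while leaving $\varphi_j$ and $\xi_j$ unchanged for every $j\neq i$, which lowers $n$. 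Since both extended values obey the same axioms, it suffices to show those axioms determine the value on each game; the base of the induction is driven by full control and (IDDP).

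\emph{Base cases.} For $n=1$, a TU game on $\{i\}$ equals $v(\{i\})\,u_i$, so by linearity the claim reduces to $\varphi_i(\{i\},u_i,P)=\xi_i(\{i\},u_i,P)$, which holds because both sides are the value dictated by full control. For $n=2$ on $\{i,j\}$, expand $v$ in the basis $\{u_i,u_j,u_{ij}\}$ of $\mathcal{G}^{\{i,j\}}$. On the game $u_j$, both $\varphi_i$ and $\xi_i$ equal the number prescribed by (IDDP). On the games $u_i$ and $u_{ij}$ player $j$ is dependent, so by consistency each of $\varphi_i,\xi_i$ equals its value on the associated one-player reduced game, and on one-player games $\varphi$ and $\xi$ already agree by the previous case. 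Linearity then yields $\varphi_i(\{i,j\},v,P)=\xi_i(\{i,j\},v,P)$, and the analogous statement for player $j$ follows because (IDDP) is stated for every player.

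\emph{Inductive step.} Let $n\geq 3$, suppose $\varphi\equiv\xi$ on all probabilistic games with fewer than $n$ players, and take $(N,v,P)$ with $|N|=n$ and a fixed target player $j$. By linearity it suffices to prove $\varphi_j(N,u_T,P)=\xi_j(N,u_T,P)$ for each $\varnothing\neq T\subseteq N$. One can always pick a player $i\in N\setminus j$ that is dependent in $u_T$, i.e.\ with $u_T(\{i\})=0$: if $|T|\geq 2$ every $i$ works; if $T=\{k\}$ with $k\neq j$ take $i\in N\setminus\{j,k\}$, nonempty because $n\geq 3$; if $T=\{j\}$ take any $i\in N\setminus j$. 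Consistency then gives $\varphi_j(N,u_T,P)=\varphi_j\big(N\setminus i,(u_T)_{-i},P_{-i}\big)$, and the same identity for $\xi$; the induction hypothesis equates the two reduced values, so $\varphi_j(N,u_T,P)=\xi_j(N,u_T,P)$. Summing over $T$ against the coefficients $c_T$ and using linearity of both values completes the induction, hence $\varphi\equiv\xi$.

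\emph{Main obstacle.} The delicate point is the existence, for every unanimity game $u_T$, of a dependent player other than the target $j$. This is exactly what breaks at $n=2$: in the game $u_i$ with target $j$, the only dependent player is $j$ itself, and $j$ cannot be removed without disturbing $\varphi_j$. That is precisely the configuration (IDDP) is designed to pin down, which is why the two-player layer must be handled directly and why (IDDP) cannot be dropped from the axiom list. Two routine cautions accompany the argument: consistency preserves $\varphi_j$ only when the removed player differs from $j$, so the reduction must always target some $i\neq j$; and $(u_T)_{-i}$ is generally not itself a unanimity game, which is harmless because the induction hypothesis applies to all games on $N\setminus i$. Degenerate cases in which a conditioning event has probability zero are absorbed by the ``otherwise $=0$'' clauses in the relevant definitions and need no separate treatment.
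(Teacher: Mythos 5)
Your proposal is correct and follows essentially the same route as the paper's proof: linearity reduces everything to unanimity games, consistency peels off a dependent player $i\neq j$ down to the one- and two-player layers, and full control together with (IDDP) pins down those base cases. The only difference is organizational---the paper factors the argument into two lemmas showing that a linear, consistent extended value is determined by the parameters $x_{ij}=\varphi_i(N,u_j,P)$ before evaluating them, whereas you run a single induction on $|N|$---but the substance is identical.
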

The full proof is provided in the appendix, together with proof of the following lemma. It certifies that none of the four axioms in Theorem~\ref{thm:main} is redundant.

\begin{lemma}\label{nonredundant}
$\text{ }$
\begin{enumerate}
\item[(i)] The extended value
\begin{equation*}
\Psi^1_i(N,v,P)=\Phi_i^+(N,v,P)
\end{equation*}
satisfies linearity, consistency, full control but not (IDDP).

\item[(ii)] The extended value
\begin{equation*}
\Psi^2_i(N,v,P)=\xi_i(N,v,P)-\Phi_i^+(N,v,P)
\end{equation*}
satisfies linearity, consistency, (IDDP) but not full control.

\item[(iii)] The extended value
\begin{equation*}
\Psi^3_i(N,v,P)=\sum_{S\ni i:|S|\le 2}v(S)\cdot P|i(S)-\sum_{T\not\ni
i:|T|\le 2}v(T)\cdot P|\neg i(T)
\end{equation*}
satisfies linearity, full control, (IDDP) but not consistency.

\item[(iv)] Let $|N|\ge 3$ and $v=\sum\limits_{S\subseteq N} \alpha_S\cdot u_S$ be the
unique decomposition of $v$ into unanimity games. The extended value
\begin{equation*}
\Psi^4_i(N,v,P)=\sum\limits_{S\subseteq N:\alpha_S\neq 0} \xi_i(N,u_S,P)
\end{equation*}
satisfies consistency, full control, (IDDP) but not linearity.
\end{enumerate}
\end{lemma}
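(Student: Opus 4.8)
The plan is to establish Theorem~\ref{thm:main} in three stages. First I would show that any extended value $\varphi$ satisfying linearity and consistency is completely determined on all of $\mathcal{PG}$ by its restriction to $1$- and $2$-player probabilistic games. For this, fix $(N,v,P)$ with $|N|=n\ge 3$ and decompose $v$ into unanimity games, $v=\sum_{S\subseteq N}\alpha_S u_S$; by linearity it suffices to compute $\varphi(N,u_S,P)$. If $|S|\le n-1$ pick $i\notin S$; then $i$ is dependent in $u_S$ (since $u_S(i)=0$), and consistency gives $\varphi_j(N,u_S,P)=\varphi_j(N_{-i},(u_S)_{-i},P_{-i})$ for all $j\neq i$. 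One checks that $(u_S)_{-i}$ is again (a multiple of) the unanimity game $u_S$ on the smaller player set, so this is a genuine reduction in $n$. The only obstruction is computing $\varphi_i(N,u_S,P)$ for the removed $i$ itself and, more seriously, handling the case $S=N$, where no player is dependent; there one has to use linearity to write $u_N$ as a combination of games in which some player \emph{is} dependent (e.g.\ $u_N = u_{N\setminus k} - \sum(\ldots)$ via the inclusion–exclusion/Möbius relations among unanimity games) and recurse. After this reduction, Proposition~\ref{propALC} and Proposition~\ref{propFC_IDDP} show $\xi$ is one such value, and the key remaining point is that the boundary data pinned down by full control (the $1$-player games) and (IDDP) (the $2$-player games $(\{i,j\},u_j,P)$) together with linearity determine $\varphi$ on \emph{all} $2$-player games: any $2$-player TU game is a linear combination of $u_i,u_j,u_{ij}$, and one must verify that full control / (IDDP) / consistency fix $\varphi$ on each of these, including $u_{ij}=u_N$ on the $2$-player set — which again needs the $u_N$ trick above, now in the base case. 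This is the step I expect to be the main obstacle: making the two-player base case airtight, in particular showing that the value on $(\{i,j\},u_{ij},P)$ is forced (consistency does not directly apply since neither player is dependent when $P(\{i\})=P(\{j\})=0$, so one leans on linearity plus the already-determined values on $u_i,u_j$ and on $(\{i,j\},u_i+u_j-u_{ij},P)$ or similar).

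Second, granting the determinacy, I would verify uniqueness simply by noting that $\xi$ satisfies all four axioms (Propositions~\ref{propALC} and~\ref{propFC_IDDP}) and that the recursion above leaves no freedom, whence $\varphi\equiv\xi$; anonymity of $\varphi$ then follows because $\xi$ is anonymous. Third, for Lemma~\ref{nonredundant} I would treat the four items by exhibiting, for each, the explicit extended value given and checking it fails exactly one axiom while satisfying the other three.

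For (i), $\Psi^1=\Phi^+$: linearity is immediate; consistency is checked by the same telescoping computation as in the proof of Proposition~\ref{propALC}, now applied to $\Phi_j^+(N,v,P)=\sum_{S\ni j}P|j(S)\,(v(S)-v(S\setminus j))$ using identities \eqref{eq pmini}–\eqref{eq pmini2} and the definition of $v_{-i}$; full control holds since on $(\{i\},u_i,P)$ with $P(\{i\})>0$ one gets $\Phi_i^+ = P|i(\{i\})(u_i(\{i\})-u_i(\varnothing))=1$; and (IDDP) fails because on $(\{i,j\},u_j,P)$ one computes $\Phi_i^+(\{i,j\},u_j,P)=P|i(ij)\,(u_j(ij)-u_j(j))=0\neq P|i(ij)-P|\neg i(j)$ in general. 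For (ii), $\Psi^2=\xi-\Phi^+$ inherits linearity and consistency from $\xi$ and $\Phi^+$ (difference of consistent linear values), satisfies (IDDP) because $\xi$ does and $\Phi^+$ contributes $0$ to the right-hand side as just computed, but violates full control since $\Psi^2_i(\{i\},u_i,P)=1-1=0\neq 1$ when $P(\{i\})>0$. For (iii), $\Psi^3$ is the "truncated" PV keeping only coalitions of size $\le 2$: linearity is clear; full control holds trivially (only the singleton $\{i\}$ is involved, $|N|=1$); (IDDP) holds because for $|N|=2$ every coalition already has size $\le 2$, so $\Psi^3=\xi$ there; but consistency fails, which I would show by a small explicit $3$-player example where reducing a dependent player moves mass onto a size-$3$ coalition whose worth is then ignored by $\Psi^3$ but would have been (partly) counted before reduction. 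For (iv), $\Psi^4$ replaces the coefficients $\alpha_S$ by their "supports" — it equals $\sum_{S:\alpha_S\neq0}\xi(N,u_S,P)$: full control and (IDDP) hold because in the relevant small games $v$ is itself a single unanimity game (or the zero game), so $\Psi^4$ agrees with $\xi$; consistency holds because reduction commutes with the unanimity decomposition in the needed way and $\xi$ is consistent term-by-term (here one uses that $(u_S)_{-i}$ is a unanimity game, so the index set $\{S:\alpha_S\neq 0\}$ transforms consistently); but linearity fails, witnessed by $v=u_S$ and $v'=-u_S$ giving $\Psi^4(v)+\Psi^4(v')=2\xi(u_S)\neq 0=\Psi^4(v+v')$, or more robustly by any $v,v'$ whose unanimity supports overlap so that cancellation of an $\alpha_S$ is not reflected. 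Throughout, the only genuinely delicate checks are the consistency verifications in (i) and (iv), which reuse the telescoping identities already proved, and the construction of the concrete counterexample in (iii).
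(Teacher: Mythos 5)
Your treatment of items (i) and (ii) matches the paper's proof essentially verbatim: the same computation $\Phi_i^+(\{i,j\},u_j,P)=P|i(ij)\,\bigl(u_j(ij)-u_j(j)\bigr)=0$ both kills (IDDP) for $\Psi^1$ and delivers (IDDP) for $\Psi^2=\xi-\Phi^+$, and full control for $\Psi^2$ fails via $1-1=0$, exactly as in the paper. Item (iv) also follows the paper's route (consistency inherited term-by-term from consistency of $\xi$ on the summands; non-linearity from cancellation in the unanimity coefficients), and your witness $v=u_S$, $v'=-u_S$ is a clean concrete instance of the paper's generic ``in general not equal'' claim.

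The one genuine gap is in (iii). You correctly reduce full control and (IDDP) to the observation that $\Psi^3\equiv\xi$ on games with at most two players, but you never construct the consistency counterexample, and the mechanism you describe is reversed: reduction only \emph{shrinks} coalitions, so it cannot ``move mass onto a size-3 coalition.'' The actual point is the opposite: mass sitting on a size-$3$ coalition is ignored by $\Psi^3$ \emph{before} reduction and lands on a size-$2$ coalition that \emph{is} counted \emph{after} reduction. The paper's example is $|N|=3$, $P(N)=P(\varnothing)=1/2$, with $v(N)\neq 0$ and the removed player $j$ dependent; then $\Psi^3_i(N,v,P)=0$ for all $i$, while $\Psi^3_i(N_{-j},v_{-j},P_{-j})=v(N)-v(\varnothing)=v(N)\neq 0$. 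As written, your search heuristic would not find this (or any) witness, so the direction needs to be corrected and the example supplied. A smaller caveat on (iv): your consistency argument asserts that $(u_S)_{-i}$ is again a unanimity game, which holds only when the removed player lies outside $S$; for $i\in S$ the reduced game is $T\mapsto \frac{P(T\cup i)}{P(T)+P(T\cup i)}\,u_{S\setminus i}(T)$, whose coefficient depends on $T$ and is therefore not a (multiple of a) unanimity game in general. The paper instead applies consistency of $\xi$ directly to each term, $\xi_i(N,u_S,P)=\xi_i(N_{-j},(u_S)_{-j},P_{-j})$, without needing the reduced summands to be unanimity games, and you should phrase the step that way.
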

%


\section{Relation between prediction value and probabilistic values \label%
{sec:relation_to_prob_values}}

\noindent The example values discussed earlier (like $\varphi,\beta,\Phi^+,%
\Phi^-$) all are members of the class of probabilistic values, i.e., they
have in common that they weight marginal contributions of a player by some
probability measure. We already noted in Remark \ref{remark1} that the natural extension of the
Banzhaf value agrees with the prediction value if $P(S)\equiv 2^{-|N|}$. We
now study the relationship between such members of the class of \emph{%
extended} probabilistic values and the prediction value somewhat more
generally.

Recall that \citeN{Weber:1988} has shown that the class of probabilistic
values is characterized by \emph{linearity, positivity,} and the \emph{null
player axiom}. The {PV} is linear but satisfies neither positivity nor the
null player axiom.\footnote{See Remark \ref{remark2} concerning null players.
Positivity is violated, e.g., for a probabilistic game $(N,v,P)$ where $v$
is positive and there is a player $i$ such that $P|i\equiv 0$.} Like the
prediction value, probabilistic values generally do not satisfy \emph{%
symmetry}. That property formalizes the idea that any symmetric players $%
i,j\in N$ in a TU game $(N,v)$ should have the same value; it will only be
satisfied by a probabilistic value $\Psi$ or the PV $\xi$ if the respective probability
measures $Q$ from \eqref{def:prob_value} or $P$ from \eqref{eq_cpi} are fully symmetric regarding $i$ and $j$.

The following result characterizes the connection between probabilistic values and the
prediction value.

\begin{theorem}
\label{thm_prob_val_eq_cpi} The identity $\Psi(\cdot,Q)\equiv \xi(\cdot,P)$ holds for $n>1
$ if and only if there exist probabilities $0< \tilde{p}_i<1$ for each player such that
\begin{equation}\label{theoremprobPV}
Q_i(S\cup i)=P|i(S\cup i)=\prod_{j\in S}\tilde{p}_j\cdot\prod_{j\in
N\backslash (S \cup i)} \left(1-\tilde{p}_j\right)
\end{equation}
holds for all $S\subseteq N\backslash i$, $i\in N$ and
\begin{equation}\label{eq:PS_in_thmprobvaleqcpi}
P(S)=\prod_{j\in S}\tilde{p}_j\cdot\prod_{j\in N\backslash S} \left(1-\tilde{%
p}_j\right)
\end{equation}
holds for all $S\subseteq N$.
\end{theorem}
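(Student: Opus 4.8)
The plan is to prove both directions. For the "if" direction, suppose $P$ is the product measure determined by the $\tilde p_j$ as in \eqref{eq:PS_in_thmprobvaleqcpi}, and $Q_i$ is given by \eqref{theoremprobPV}. First I would note that since $P$ is a product measure, Remark~\ref{remark1} already tells us that $\xi(\cdot,P)=\Phi^+(\cdot,P)=\Phi^-(\cdot,P)$, i.e., the "difference of two expectations" collapses to an "expectation of a difference". Concretely, for $i\in S$ the independence gives $P|i(S)=\prod_{j\in S\setminus i}\tilde p_j\cdot\prod_{j\in N\setminus S}(1-\tilde p_j)$, which equals $Q_i(S)$ by \eqref{theoremprobPV}, and $P|i(S)=P|\neg i(S\setminus i)$. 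Substituting into Definition~\ref{def:cpi} and reindexing the second sum by $S\mapsto S\cup i$ then yields $\xi_i(N,v,P)=\sum_{S\ni i}Q_i(S)\bigl(v(S)-v(S\setminus i)\bigr)=\Psi_i(N,v,Q)$, which is exactly \eqref{def:prob_value}.

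For the "only if" direction, assume $\Psi(\cdot,Q)\equiv\xi(\cdot,P)$ on all of $\mathcal{G}^N$ for some fixed $N$ with $n>1$. The strategy is to pin down $P$ by feeding in well-chosen test games and exploiting linearity of both sides in $v$. Evaluating the identity on unanimity games $u_T$ is the natural first move. Taking $T=\{k\}$ a singleton: $\Psi_i(N,u_k,Q)$ is $Q_i(\{i\})$ if $k=i$ and $-Q_i(\text{coalitions containing }k\text{ but not }i)\cdot(\cdots)$... more carefully, $\Psi_i(N,u_k,Q)=\sum_{S\ni i}Q_i(S)\bigl(u_k(S)-u_k(S\setminus i)\bigr)$, which for $k\neq i$ equals $-Q_i(\{S\ni i: k\notin S\})$ ... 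I would set up these identities and the matching expressions $\xi_i(N,u_T,P)$ in terms of the conditional masses $P|i$ and $P|\neg i$, and solve. The key intermediate claims to establish are: (a) $P|i(S)=Q_i(S)$ for all $S\ni i$ — so the conditional-on-$i$ distribution must literally be the probabilistic-value weighting; and (b) $P$ is a product measure, equivalently $P|i(S)=P|\neg i(S\setminus i)$ for all $i$ and all $S\ni i$. Once (b) holds, defining $\tilde p_j := P(\{S: j\in S\})$ and checking $0<\tilde p_j<1$ (strict, because degenerate marginals would force some $P|i$ or $P|\neg i$ to be undefined/zero, contradicting the identity holding for all $v$, e.g. via a positivity-type test game as in the footnote on positivity) gives the product form \eqref{eq:PS_in_thmprobvaleqcpi}, and then \eqref{theoremprobPV} follows from (a) and (b) combined.

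The cleanest route to (a) and (b) is probably to use two-player restrictions together with consistency, mirroring the architecture of Theorem~\ref{thm:main}: both $\Psi(\cdot,Q)$ (for suitably induced $Q$) and $\xi(\cdot,P)$ are linear and consistent extended values in the sense of Section~\ref{sec:characterization} — here I would need to check that a probabilistic value, read as an extended value via an appropriate coherent family $\{Q_i\}$, is consistent, or else argue directly. Assuming that reduction works, the identity on $n$-player games forces the identity on all $2$-player reduced games $G_{-i_1\cdots}$, and on $2$ players the IDDP-type computation with $v=u_j$ gives $P|i(\{i,j\})-P|\neg i(\{j\}) = \Psi_i(\{i,j\},u_j,Q)$, while $v=u_{\{i,j\}}$ gives another equation; together with the analogous equations from the other player these determine that the two-player $P$ is a product of its marginals and that those marginals coincide with the $Q$-weights. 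Lifting back up via the structure of $P_{-i}$ (which aggregates $P(S)$ and $P(S\cup i)$) propagates the product structure to all of $2^N$.

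The main obstacle I anticipate is the "only if" direction's bookkeeping: showing that the single scalar identity $\Psi_i(N,v,Q)=\xi_i(N,v,P)$, holding for all $v$, is rigid enough to force $P$ to factor as a product measure — the danger is that $\Psi$ and $\xi$ might agree "by accident" on enough test games without $P$ being a product. Ruling this out requires choosing test games (unanimity games $u_T$ for $|T|\ge 2$, or indicator games of single coalitions) that separate $P(S)$ from $P(S')$ for coalitions differing in more than one player, and then checking that the resulting system of equations has only the product-measure solution. Handling the boundary cases — where some $\sum_{T\ni i}P(T)$ or $\sum_{T\not\ni i}P(T)$ vanishes, so that $P|i$ or $P|\neg i$ is identically zero — is a secondary technical nuisance: one shows these cases are incompatible with $\Psi(\cdot,Q)\equiv\xi(\cdot,P)$ unless $0<\tilde p_j<1$ strictly, e.g. by exhibiting a game where the left side is nonzero while the right side is forced to zero.
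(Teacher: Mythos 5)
Your plan matches the paper's proof in its essentials: the two intermediate claims you label (a) and (b) are exactly the paper's Lemmas~\ref{lemma_c_1} and \ref{lemma_c_2}, and the paper establishes both precisely by evaluating the identity on unanimity games~-- $u_S$ with induction on $|S|$ downward for (a), and $u_{N\setminus S}$ with induction on $|S|$ upward for (b) (for $u_{N\setminus S}$ one gets $\Psi_i(u_{N\setminus S},Q)=0$ while $\xi_i$ becomes a sum of differences $P|i(T\cup i)-P|\neg i(T)$, which is what forces (b)). The degenerate marginals are handled as you anticipate: $p_i=0$ contradicts $Q_i$ being a probability distribution via (a), and $p_i=1$ is excluded for $n>1$ by a separate lemma. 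The ``if'' direction is the direct computation you describe, as in Remark~\ref{remark1}.

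The one place your plan would go wrong is the route you call ``cleanest'': reducing to two-player games via consistency. The reduction $v\mapsto v_{-i}$, $P\mapsto P_{-i}$ of Definition~\ref{def:dependent_and_reduced_game} is built from $P$, which $\Psi(\cdot,Q)$ ignores entirely; there is no canonical induced family $Q_{-i}$ for which $\Psi_j(N_{-i},v_{-i},Q_{-i})=\Psi_j(N,v,Q)$ before one already knows $Q_i=P|i$, so that route is circular (the consistency of $\Phi^+$ in Lemma~\ref{nonredundant}(i) is consistency of a $P$-dependent object, not of an arbitrary $\Psi(\cdot,Q)$). Stick with the direct unanimity-game computation, noting that extracting individual values $Q_i(S)$ and $P|i(S)$ from the resulting tail sums requires the up/down induction on $|S|$. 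Also spell out the final step your plan leaves implicit: once (b) holds and $0<p_i<1$, it gives $P(S)/p_i=P(S\setminus i)/(1-p_i)$, i.e.\ $P(S)=\tfrac{p_i}{1-p_i}\,P(S\setminus i)$ for $i\in S$ with $|S|\ge 2$; iterating down to singletons and normalizing via $p_i=\sum_{S\ni i}P(S)$ yields $P(\{i\})=p_i\prod_{j\neq i}(1-p_j)$ and hence the product form \eqref{eq:PS_in_thmprobvaleqcpi} with $\tilde p_j=p_j$.
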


The proof can be found in the appendix.

An important subclass of probabilistic values has the symmetry property:
\emph{semivalues} are defined by (\ref{def:prob_value}) and weights $Q_i(S)$
that depend on $S$ only via $|S|$ \shortcite{Dubey/Neyman/Weber:1981}.%
\footnote{%
See, e.g., \citeN{Calvo/Santos:2000} for a discussion of other subclasses
like \emph{weighted Shapley values}, \emph{weak semivalues} ($Q_i(S)$
depends only on $S$), or \emph{weighted weak semivalues} ($Q_i(S)$ is
decomposable as $w_i\cdot p_S$ where $w_i$ depends only on $i$ and $p_S$
depends only on $S$).} They are defined by
\begin{equation}\label{unscaledsemi}
f_i^q(N,v)=\sum\limits_{S\subseteq N\backslash i}q_{|S|}\cdot\Big(v(S\cup
i)-v(S)\Big)
\end{equation}
for a vector of $n$~non-negative numbers $q=(q_0,\dots,q_{n-1})\neq 0$ with
\begin{equation}
\label{eq_sv_normalization}
\sum\limits_{k=0}^{n-1}{\binom{{n-1}}{k}}q_k=1.
\end{equation}
The Shapley value arises by setting $q_k=\frac{1}{n{\binom{{n-1}}{k}}}$; the
Banzhaf index for $q_k=\frac{1}{2^{n-1}}$.

The following result characterizes the connection between semivalues and the
prediction value by answering the question: for which $q$ can one find $P$
such that $f^q(N,v)=\xi(N,v,P)$ for all $(N,v)\in\mathcal{G}^N$? This
identifies all semivalues which can be interpreted as the
prediction value for specific $P$.

\begin{proposition}
\label{prop_par} For a given semivalue $f^q$
and $n>1$ there exists $P$ such
that $f^q(\cdot)\equiv \xi(\cdot, P)$ on $\mathcal{G}^N$ if and only if there
is an $\alpha>0$ with $q_k=q_0\alpha^k>0$ for all $0\le k\le n-1$, where
$q_0^{-1}=\sum_{k=0}^{n-1}{\binom{{n-1}}{k}}\alpha^k$.
\end{proposition}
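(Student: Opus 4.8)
The plan is to exploit the characterization already available in Theorem~\ref{thm_prob_val_eq_cpi}. A semivalue $f^q$ is a special probabilistic value with $Q_i(S\cup i)=q_{|S|}$, so the identity $f^q(\cdot)\equiv\xi(\cdot,P)$ on $\mathcal{G}^N$ is a special case of $\Psi(\cdot,Q)\equiv\xi(\cdot,P)$. By Theorem~\ref{thm_prob_val_eq_cpi}, such a $P$ exists if and only if there are probabilities $0<\tilde p_j<1$ with
\begin{equation*}
q_{|S|}=Q_i(S\cup i)=\prod_{j\in S}\tilde p_j\cdot\prod_{j\in N\setminus(S\cup i)}(1-\tilde p_j)\qquad\text{for all }S\subseteq N\setminus i,\ i\in N,
\end{equation*}
together with the product form \eqref{eq:PS_in_thmprobvaleqcpi} for $P$. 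So the whole proposition reduces to the following purely combinatorial question: for which vectors $q=(q_0,\dots,q_{n-1})$ can one choose numbers $\tilde p_1,\dots,\tilde p_n\in(0,1)$ so that the product $\prod_{j\in S}\tilde p_j\prod_{j\notin S\cup i}(1-\tilde p_j)$ depends only on $|S|$ and equals $q_{|S|}$?

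First I would show that the requisite $\tilde p_j$ must all be equal. Fix $i$ and pick two players $a,b\in N\setminus i$ with $a\ne b$ (possible since $n>1$, and for $n=2$ the argument degenerates but the claim is easy to check directly); comparing the expression for $S=\{a\}$ and $S=\{b\}$, both of size $1$, forces $\tilde p_a\prod_{j\ne a,i}(1-\tilde p_j)=\tilde p_b\prod_{j\ne b,i}(1-\tilde p_j)$, and after cancelling the common factor $\prod_{j\ne a,b,i}(1-\tilde p_j)$ (nonzero since each $\tilde p_j<1$) one gets $\tilde p_a(1-\tilde p_b)=\tilde p_b(1-\tilde p_a)$, i.e. $\tilde p_a=\tilde p_b$. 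Letting $a,b$ range over all of $N\setminus i$ and then varying $i$ shows $\tilde p_j=:p$ is a common value for all $j\in N$. Writing $p$ for this value and setting $\alpha:=p/(1-p)>0$, the constraint becomes $q_k=p^k(1-p)^{n-1-k}=(1-p)^{n-1}\alpha^k=q_0\alpha^k$, and the normalization \eqref{eq_sv_normalization} gives $1=\sum_{k=0}^{n-1}\binom{n-1}{k}q_k=q_0\sum_{k=0}^{n-1}\binom{n-1}{k}\alpha^k$, so $q_0^{-1}=\sum_{k=0}^{n-1}\binom{n-1}{k}\alpha^k$; note $q_0=(1-p)^{n-1}>0$ and hence $q_k>0$ for all $k$, as claimed. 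This proves necessity.

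For sufficiency, suppose $q_k=q_0\alpha^k>0$ with $\alpha>0$ and $q_0^{-1}=\sum_{k}\binom{n-1}{k}\alpha^k$. Set $p:=\alpha/(1+\alpha)\in(0,1)$, so that $\alpha=p/(1-p)$ and $q_0=(1-p)^{n-1}$ (this follows from $q_0^{-1}=\sum_k\binom{n-1}{k}\alpha^k=(1+\alpha)^{n-1}=(1-p)^{-(n-1)}$). Then with $\tilde p_j:=p$ for all $j$ one checks directly that $\prod_{j\in S}\tilde p_j\prod_{j\notin S\cup i}(1-\tilde p_j)=p^{|S|}(1-p)^{n-1-|S|}=q_0\alpha^{|S|}=q_{|S|}=Q_i(S\cup i)$ and that $P(S):=p^{|S|}(1-p)^{n-|S|}$ is a probability distribution with $P|i(S\cup i)=p^{|S|}(1-p)^{n-1-|S|}$; Theorem~\ref{thm_prob_val_eq_cpi} then yields $f^q(\cdot)\equiv\xi(\cdot,P)$ on $\mathcal{G}^N$.

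I do not expect a genuine obstacle here: the only slightly delicate point is the equality-of-parameters argument, where one must be careful that the cancellation of $\prod_{j\ne a,b,i}(1-\tilde p_j)$ is legitimate (it is, since $\tilde p_j<1$ strictly) and that the argument still has content when $n=2$ — in which case $N\setminus i$ is a singleton and there is nothing to equate, so one simply reads off $q_0=1-\tilde p_j$ and $q_1$ does not exist; here $\alpha$ can be taken to be $q_1/q_0$ with the convention that for $n=2$ the vector is $q=(q_0,q_1)$, and one re-derives everything from \eqref{eq_sv_normalization} giving $q_0+q_1=1$. Modulo this small-$n$ bookkeeping, the proposition is an immediate corollary of Theorem~\ref{thm_prob_val_eq_cpi}.
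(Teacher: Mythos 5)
Your proposal is correct and follows essentially the same route as the paper: reduce to Theorem~\ref{thm_prob_val_eq_cpi}, show the $\tilde p_j$ must coincide, and pass to the parametrization $\alpha=\tilde p/(1-\tilde p)$; the only cosmetic difference is that you equate $\tilde p_a$ and $\tilde p_b$ by comparing two singletons conditional on a fixed $i$, whereas the paper compares $P|i(S)=q_{|S|-1}=P|j(S)$ for the same coalition $S\ni i,j$, which also disposes of $n=2$ uniformly. Your side remark that ``$q_1$ does not exist'' for $n=2$ is a slip (it does, with $q_0+q_1=1$ and $q_1=\tilde p_j$), but the varying-$i$ argument you already invoke closes that degenerate case.
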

\begin{proof}
Assume $f^q(\cdot)\equiv \xi(\cdot,P)$. Then, anticipating Lemma~\ref{lemma_c_1} (see the appendix), we conclude $q_{|S|-1}=P|i(S)$ for all
   $\{i\}\subseteq S\subseteq N$ and all $i\in N$.
Applying Theorem~\ref{thm_prob_val_eq_cpi} it follows that there exist $\tilde{p}_j\in(0,1)$ for all $j\in N$
  such that $q_{|S\backslash i|}=\prod_{j\in S}\tilde{p}_j\cdot\prod_{j\in N\backslash (S\cup i)}
   \left(1-\tilde{p}_j\right)$.
   From  $P|i(S)=q_{|S|-1}=P|j(S)$, where $i,j\in S$, one obtains $\tilde{p}_i=\tilde{p}_j$ for all $i,j\in N$. Setting
   $\alpha=\frac{\tilde{p}_1}{1-\tilde{p}_1}$ we can write $q_k=\alpha^k\cdot \left(1-\tilde{p}_1\right)^{n-1}=
   \tilde{p}_1^k\left(1-\tilde{p}_1\right)^{n-k-1}$ for all $0\le k\le n-1$. We observe
   that $q$ satisfies equation~(\ref{eq_sv_normalization}) if we choose
   $q_0^{-1}=\sum_{k=0}^{n-1}{\binom{{n-1}}{k}}\alpha^k=\left(1-\tilde{p}_1\right)^{-n+1}$.

   Since
$\tilde{p}_1\mapsto\frac{\tilde{p}_1}{1-\tilde{p}_1}$
   is a bijection from $(0,1)$ to $(0,\infty)$, we directly obtain $\tilde p_1$ from a given $\alpha>0$ and can then easily check that $f^q$ defined by $q_k=q_o\alpha^k$ is indeed identical to $\xi(\cdot,P)$ with $P$ defined by $\tilde p_j=\tilde p_1$ for all $j\in N$ and by equation (\ref{eq:PS_in_thmprobvaleqcpi}).
\end{proof}

It follows that semivalues which allow for the
interpretation as a prediction value form a special subclass of semivalues.
They are known as \emph{binomial semivalues} (see \shortciteNP{Dubey/Neyman/Weber:1981}; \citeNP{Carreras/Freixas:2008}; \citeNP{Carreras/Puente:2012}~-- an axiomatic characterization has been
given by \citeNP{Amer/Gimenez:2007}). Specifically, a $p$-binomial semivalue
is defined by
\begin{equation}
q_k=p^k\left(1-p\right)^{n-k-1}\text{~~for~~~}0<p<1.
\end{equation}
Setting $\alpha=\frac{p}{1-p}$ matches the parametrization in Proposition~\ref{prop_par}, where $q_0$ can be determined from equation~(\ref{eq_sv_normalization}).\footnote{%
In some definitions in the literature
the extreme cases $p=0$ and $p=1$ are allowed, too, with the convention $%
0^0=1$. For $p=0$ we would get
the \emph{dictatorial index} and for $p=1$ the \emph{marginal index}. See %
\citeN{Owen:1978} for details. However, note that neither
$p=0$ nor $p=1$ satisfy the conditions from Proposition~\ref{prop_par}.}
For each given $\alpha>0$ we obtain a unique
semivalue. The Banzhaf value corresponds to $\alpha=1$ and $q_0=\frac{1}{2^{n-1}}
$. For $n=1$ each $\alpha>0$ yields the same value given by $q=1$. For $n=2$ we
set $\alpha=\frac{q_1}{q_0}$. For $n\ge 3$ it depends on the specific semivalue
whether it can be viewed as a restriction of the PV or not. As already suggested by our Remark~\ref{remark2} on correlated decisions and null players, a negative result obtains for the \emph{Shapley value} $\varphi$. It illustrates the
fundamental difference between traditional semivalues and the new value concept
proposed in this paper:

\begin{corollary}\label{cor2}
For $n\ge 3$ there exists no $P$ such that $\varphi(\cdot)\equiv
\xi(\cdot,P) $ on $\mathcal{G}^N$.
\end{corollary}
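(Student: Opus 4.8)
The plan is to read the result off from Proposition~\ref{prop_par} by a short computation: it suffices to check that the coefficient vector of the Shapley value is not geometric once $n\ge 3$. Recall that $\varphi$ is the semivalue $f^q$ with $q_k=\frac{1}{n\binom{n-1}{k}}$ for $0\le k\le n-1$, and that $q_0,\dots,q_{n-1}$ are all strictly positive. By the ``only if'' direction of Proposition~\ref{prop_par}, the existence of a distribution $P$ with $\varphi(\cdot)\equiv\xi(\cdot,P)$ on $\mathcal{G}^N$ would force $q_k=q_0\alpha^k$ for some $\alpha>0$ and all $k$; in particular the consecutive ratios $q_{k+1}/q_k$ would all have to coincide.

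Next I would compute these ratios explicitly. For $0\le k\le n-2$,
\begin{equation*}
\frac{q_{k+1}}{q_k}=\frac{\binom{n-1}{k}}{\binom{n-1}{k+1}}=\frac{k+1}{n-1-k}.
\end{equation*}
When $n\ge 3$ there are at least the two instances $k=0$ and $k=1$, giving $q_1/q_0=\frac{1}{n-1}$ and $q_2/q_1=\frac{2}{n-2}$. Equality of these two would require $n-2=2(n-1)$, i.e.\ $n=0$, which is impossible for $n\ge 3$. Hence $q$ is not a geometric progression, so no admissible $\alpha>0$ in the sense of Proposition~\ref{prop_par} exists, and therefore no $P$ with $\varphi(\cdot)\equiv\xi(\cdot,P)$ on $\mathcal{G}^N$ exists.

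There is no genuine obstacle here beyond this one computation; the substantive work was already done in Theorem~\ref{thm_prob_val_eq_cpi} and Proposition~\ref{prop_par}. The only point worth flagging is the role of the hypothesis $n\ge 3$: for $n=2$ only the single ratio $q_1/q_0$ occurs, so the geometric condition is vacuous, and indeed the Shapley value (which for $n=2$ coincides with the Banzhaf value, i.e.\ the $\tfrac12$-binomial semivalue) does equal $\xi(\cdot,P)$ for a suitable $P$; thus the statement is genuinely restricted to $n\ge 3$, consistent with Remark~\ref{remark2}.
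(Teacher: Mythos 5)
Your proof is correct and follows essentially the same route as the paper: both invoke Proposition~\ref{prop_par} to reduce the question to whether the Shapley coefficients $q_k=\bigl[n\binom{n-1}{k}\bigr]^{-1}$ form a geometric progression, and both derive the contradiction $n=0$ from comparing $q_1/q_0=\tfrac{1}{n-1}$ with $q_2/q_1=\tfrac{2}{n-2}$. Your closing remark on why $n=2$ escapes the argument is a sensible addition but not needed for the statement.
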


\begin{proof}
Recall that $\varphi\equiv f^q$ with $q_k=\big[n{\binom{{n-1}}{k}}\big]^{-1}$.
Let $n\ge 3$ and $P$ be such that $\varphi(\cdot)\equiv \xi(\cdot,P)$.
We can
deduce $q_0=\frac{1}{n}$ and $\alpha=\frac{q_1}{q_0}=\frac{1}{n-1}$ from
Proposition~\ref{prop_par}. Since $q_2=q_1\cdot\frac{2}{n-2}$ the condition $%
q_2=q_0\alpha^2= q_1\alpha$ implies $n=0$, in contradiction to $n\ge 3$.
\end{proof}


\section{Prediction values in the Dutch Parliament 2008--2010}

\label{sec: DP}

\noindent As illustration of the {prediction value}'s practical
applicability and of how its informational importance indications can be
very different from power ascriptions by traditional values, we consider the
seat distribution and voting behavior in the Dutch Parliament between 2008
and 2010. This was the period of the left-centered \emph{Balkenende IV}
government, which consisted of Christian democrats from the CDA and Christen
Unie parties and the social democratic PvdA.

\begin{table}[htp]
\begin{center}
{\scriptsize \
\begin{tabular}{c|c|c|c|c|c|c|c|c|c|c|c|}
\multicolumn{1}{c}{} & \multicolumn{1}{c}{CDA} & \multicolumn{1}{c}{CU} &
\multicolumn{1}{c}{D66} & \multicolumn{1}{c}{GL} & \multicolumn{1}{c}{PvdA}
& \multicolumn{1}{c}{PvdD} & \multicolumn{1}{c}{PVV} & \multicolumn{1}{c}{SGP
} & \multicolumn{1}{c}{SP} & \multicolumn{1}{c}{Verdonk} &
\multicolumn{1}{c}{VVD} \\ \cline{2-12}
Seats & 41 & 6 & 3 & 7 & 33 & 2 & 9 & 2 & 25 & 1 & 21 \\ \cline{2-12}
$\beta$ & 0.597 & 0.073 & 0.038 & 0.089 & 0.398 & 0.026 & 0.120 & 0.026 &
0.306 & 0.013 & 0.200 \\ \cline{2-12}
$\varphi$ & 0.317 & 0.036 & 0.021 & 0.044 & 0.225 & 0.015 & 0.061 & 0.015 &
0.155 & 0.007 & 0.104 \\ \cline{2-12}
$\Phi^+$ & 0.665 & 0.040 & 0.005 & 0.051 & 0.283 & 0.004 & 0.074 & 0.004 &
0.235 & 0.001 & 0.210 \\ \cline{2-12}
$\Phi^-$ & 0.660 & 0.021 & 0.004 & 0.050 & 0.434 & 0.005 & 0.061 & 0.002 &
0.140 & 0.000 & 0.131 \\ \cline{2-12}
$\xi$ & 0.782 & 0.318 & 0.248 & 0.468 & 0.330 & 0.023 & 0.369 & 0.182 & 0.217
& 0.217 & 0.278 \\ \cline{2-12}
\end{tabular}
\medskip }
\end{center}
\caption{Values in the Dutch Parliament}
\label{tab:DP_values}
\end{table}
The distribution of the 150 seats in parliament between its eleven parties
is displayed in the top part of Table~\ref{tab:DP_values}. The three
government parties held a majority of 80 out of 150 seats. When voting on
non-constitutional propositions, the Dutch Parliament applies simple
majority rule. It is straightforward to define a voting game with this
information, and to calculate the corresponding \emph{a priori} Banzhaf and
Shapley values $\beta$ and $\varphi$.

We used the parliamentary information system \emph{Parlis}\footnote{%
The data is available through \texttt{http://data.appsvoordemocratie.nl}} in
order to extract information on members, meetings, votes and decisions on
propositions in the 2008--2010 period. From the records of regular plenary
voting rounds, where parties vote as blocks,
we derived the empirical frequencies of the $2^{11}$ conceivable divisions
into \emph{yes} and \emph{no}-camps from 2720 observations.\footnote{%
We pooled all regular plenary votes in order to illustrate the simplest way
in which data can be used to infer interdependencies in a voting body~-- one
might want to split the data with respect to topics, or weight distinct
calls by their importance, in actual political analysis. Note that the Dutch
Parliament's chairperson assumes that parties vote as blocks unless some MP
demands voting by call. Only then can members of the same party vote
differently. We excluded such cases of `non-coherent voting' from our
analysis.} Defining $P$ by these empirical frequencies, we calculated the
corresponding prediction values $\xi_i$ of the parties as well as their positive and negative conditional decisiveness values $\Phi^+_i$ and $\Phi^-_i$ defined in (\ref{eq:Phi+}) and (\ref{eq:Phi-}). A summary of the
results is given in the bottom part of Table~\ref{tab:DP_values}.

The {PV}-scores $\xi_i$ of Dutch parties tend to be higher than their
respective traditional Banzhaf or Shapley power measures $\beta_i$ and $%
\varphi_i$, and even the decisiveness measures $\Phi^+_i$ and $\Phi^-_i$
which incorporate the same empirical estimate of $P$. In particular, the {%
prediction value} ascribes rather substantial numbers also to small parties
like D66, SGP, or Verdonk.

\begin{table}[htp]
\begin{center}
{\scriptsize \
\begin{tabular}{r|r|r|r|r|r|r|r|r|r|r|r|}
\multicolumn{1}{c}{} & \multicolumn{1}{c}{CDA} & \multicolumn{1}{c}{CU} &
\multicolumn{1}{c}{D66} & \multicolumn{1}{c}{GL} & \multicolumn{1}{c}{PvdA}
& \multicolumn{1}{c}{PvdD} & \multicolumn{1}{c}{PVV} & \multicolumn{1}{c}{SGP
} & \multicolumn{1}{c}{SP} & \multicolumn{1}{c}{Verdonk} &
\multicolumn{1}{c}{VVD} \\ \cline{2-12}
CDA & 1.000 & 0.267 & 0.263 & 0.483 & 0.237 & -0.044 & 0.324 & 0.221 & -0.026
& -0.026 & 0.012 \\ \cline{2-12}
CU & 0.267 & 1.000 & 0.631 & 0.348 & 0.601 & 0.015 & 0.178 & 0.459 & 0.094 &
0.094 & 0.158 \\ \cline{2-12}
D66 & 0.263 & 0.631 & 1.000 & 0.348 & 0.811 & 0.044 & 0.169 & 0.693 & 0.034
& 0.034 & -0.008 \\ \cline{2-12}
GL & 0.483 & 0.348 & 0.348 & 1.000 & 0.315 & -0.003 & 0.171 & 0.259 & 0.019
& 0.019 & 0.068 \\ \cline{2-12}
PvdA & 0.237 & 0.601 & 0.811 & 0.315 & 1.000 & 0.040 & 0.161 & 0.714 & 0.027
& 0.027 & -0.003 \\ \cline{2-12}
PvdD & -0.044 & 0.015 & 0.044 & -0.003 & 0.040 & 1.000 & 0.198 & 0.171 &
0.536 & 0.536 & 0.389 \\ \cline{2-12}
PVV & 0.324 & 0.178 & 0.169 & 0.171 & 0.161 & 0.198 & 1.000 & 0.203 & 0.263
& 0.263 & 0.285 \\ \cline{2-12}
SGP & 0.221 & 0.459 & 0.693 & 0.259 & 0.714 & 0.171 & 0.203 & 1.000 & 0.110
& 0.110 & 0.025 \\ \cline{2-12}
SP & -0.026 & 0.094 & 0.034 & 0.019 & 0.027 & 0.536 & 0.263 & 0.110 & 1.000
& 1.000 & 0.554 \\ \cline{2-12}
Verdonk & -0.026 & 0.094 & 0.034 & 0.019 & 0.027 & 0.536 & 0.263 & 0.110 &
1.000 & 1.000 & 0.554 \\ \cline{2-12}
VVD & 0.012 & 0.158 & -0.008 & 0.068 & -0.003 & 0.389 & 0.285 & 0.025 & 0.554
& 0.554 & 1.000 \\ \cline{2-12}
\end{tabular}
\medskip }
\end{center}
\caption{Correlation coefficients for 2008--2010 votes in Dutch Parliament}
\label{tab:DP_correlation}
\end{table}

This reflects specificities of the political situation in the Netherlands
and that the {PV} picks up corresponding correlations between the voting
behavior of different parties. Varying majorities at calls are quite common
in the Dutch Parliament. The member parties of the government do not
necessarily vote the same way; some are frequently supported by smaller
opposition parties. The correlation coefficients reported in Table~\ref%
{tab:DP_correlation} indicate, for instance, that SGP and D66 quite commonly
voted the same way as CU and PvdA. Their PV numbers hence differ much less
than their seat shares.

Verdonk and SP constitute an extreme case in this respect. The former is
commonly considered as right-wing, the latter as a left-wing party;
still both voted the same way at each call in the data set (presumably
having different reasons). Perfect correlation of their votes implies that
both have the same prediction value~-- despite SP having 25 seats and
Verdonk but one: knowing either's vote in advance would have been equally
valuable for predictive purposes. Measures based on marginal contributions,
in contrast, clearly favor SP over Verdonk (though less so if the \emph{%
a~posteriori} correlation between SP's and Verdonk's votes is ignored).
Interestingly, the GL party has the second-highest prediction value: despite
it not being in government and having only the sixth-largest seat share,
support by GL was a better predictor of a bill's success than support by any
except the biggest party (CDA).


\section{Concluding Remarks\label{sec:conclusion}}

\noindent Traditional semivalues like the Shapley or Banzhaf values and the
prediction value provide two qualitatively distinct perspectives on the
importance of the members of a collective decision body. One highlights the
difference that an ad-hoc change of a given player~$i$'s membership in the
coalition which eventually forms would make from an ex ante perspective; the
other stresses the difference that the change of a player's presumed
membership makes for one's ex~ante assessment of realized worth. As the
figures in Table~\ref{tab:DP_values} illustrate, both can differ widely in
case players' behavior exhibits interdependencies. But, as captured by
Proposition~\ref{prop_par}, they coincide in case of statistical independence.
The latter is presumed by the behavioral model underlying, e.g., the Banzhaf
value, but incompatible with that underlying the Shapley value.

For independent individual voting decisions, the conditioning on different
votes of player~$i$ adds no behavioral information to the numerical one
about $i$'s weight contribution to either the \emph{yes} or \emph{no} camp. Then $i$'s
informational importance and $i$'s voting power or influence~-- reflected by
sensitivity of the collective decision to a last-minute change of $i$'s
behavior~-- are aligned.\footnote{In the case of the Banzhaf value, coincidence between voter~$i$'s influence
as picked up by $i$'s average marginal contribution and the informational
effect of knowing $i$'s vote has been hinted at by \citeN[3.2.12--15]{Felsenthal/Machover:1998}.}

It might be criticized that in cases of interdependence, the prediction
value fails to distinguish correlation and causation. For illustration,
consider decisions by a weighted voting body in which some player~$i$ has
zero weight but all other players' decisions are perfectly correlated with
that of $i$. Player~$i$'s prediction value is then one irrespective of
whether (i) players $j\neq i$ `follow' $i$ as, say, their guru or supreme
leader and cast their weight as $i$ would if he had any, (ii) $i\neq k$ and
all players $j\neq k$ follow a specific other player~$k$, or (iii) all
players debate the merit of a proposal based on different initial
inclinations and collective opinion dynamics converge to, for instance, the
majority inclination.\footnote{%
See \citeANP{Grabisch/Rusinowska:2010}'s \citeyear{Grabisch/Rusinowska:2010}
related work on possibilities to aggregate individual influence in command
structures.} But since knowing $i$'s \emph{decision}~-- rather than $i$'s initial
inclination~-- will always fully reveal the realized outcome, $\xi_i=1$ can
be regarded more as a feature than a flaw.

This example points to an interesting extension of the proposed ``difference
of conditional expected values''-approach to measuring importance. Namely,
start with a given description $(N,v,P)$ of a decision body where $P$
corresponds to, say, the Banzhaf uniform distribution and augment it by the
formal description of a social opinion formation process which defines a
mapping from players' binary initial voting inclinations to a distribution
over final ones after social interaction. One can then capture a player~$i$%
's combined social \emph{and} formal influence in the decision body by
answering the question: how much does knowing that $i$'s \emph{initial inclination}
is in favor (or against) modify the final outcome which is to be expected?
We conjecture that this approach actually has advantages over extending
marginal contribution-based analysis to social interaction,\footnote{%
See, for instance, the power scores derived from swings in societies with
opinion leaders by \shortciteN{vandenBrink/Rusinowska/Steffen:2013}.} and
plan to pursue this extension in future research.

\newpage


\section*{Appendix}

\subsection*{Proof of Theorem \ref{thm:main}}

\noindent The proof proceeds in three steps. First, in Lemma~\ref{lemma_char2p} we prove for $|N|=2$ that linearity and consistency imply that an extended value is determined by unanimity games. Second, we generalize this to all probabilistic games in Lemma~\ref{thm_fully_specified}. Finally, we show that the full control property and (IDDP) characterize the PV for 2-player probabilistic games and hence probabilistic games in general.

\begin{lemma}
\label{lemma_char2p} Consider an extended value $\varphi$ that is
linear on the space of all 2-player probabilistic games and consistent. For any set $N$ with $|N|=2$, the mapping $(N,v,P)\mapsto\varphi(N,v,P)$ is
fully determined by the numbers
\begin{equation}  \label{eq char2p}
x_{ij}:=\varphi_i(N,u_j,P)\ \mathnormal{\ for }\ i,j\in N.
\end{equation}
\end{lemma}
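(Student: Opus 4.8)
The plan is to show that for a 2-player set $N=\{i,j\}$, linearity in $v$ together with consistency pins down $\varphi(N,v,P)$ in terms of the four numbers $x_{kl}=\varphi_k(N,u_l,P)$, $k,l\in N$. First I would set up a convenient basis for the space of characteristic functions on a 2-player set. A TU game $(N,v)$ with $N=\{i,j\}$ is determined by the three numbers $v(\{i\})$, $v(\{j\})$, $v(\{i,j\})$ (recall $v(\varnothing)=0$). A natural spanning set of $\mathcal{G}^N$ is $\{u_i,u_j,u_{N}\}$, since $u_i(\{i\})=u_i(N)=1$ and $u_i(\{j\})=0$, etc.; indeed any $v$ decomposes uniquely as $v=\alpha_i u_i+\alpha_j u_j+\alpha_N u_N$ with $\alpha_i=v(\{i\})$, $\alpha_j=v(\{j\})$, $\alpha_N=v(N)-v(\{i\})-v(\{j\})$. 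By linearity of $\varphi$ in $v$ (with $P$ fixed), it therefore suffices to determine $\varphi(N,u_i,P)$, $\varphi(N,u_j,P)$ and $\varphi(N,u_N,P)$ from the $x_{kl}$. The values $\varphi_k(N,u_i,P)$ and $\varphi_k(N,u_j,P)$ for $k\in N$ are exactly the eight — really four, by the two-element constraint — numbers $x_{kl}$ by definition, so the only thing left is to express $\varphi(N,u_N,P)$ in terms of data already available.

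The key step — and the place where consistency does the work — is the treatment of $u_N$. In the game $u_N=u_{\{i,j\}}$ one has $u_N(\{i\})=u_N(\{j\})=0$, so both $i$ and $j$ are dependent players, and the reduced-game operation from Definition~\ref{def:dependent_and_reduced_game} applies. Removing $j$, say, yields a 1-player probabilistic game $G_{-j}=(\{i\},v_{-j},P_{-j})$ on which consistency forces $\varphi_i(N,u_N,P)=\varphi_i(G_{-j})$. The point is that a 1-player extended value is just a single real number, so consistency expresses one of the two unknown coordinates of $\varphi(N,u_N,P)$ as an already-determined quantity — and by the symmetric argument, removing $i$ handles $\varphi_j(N,u_N,P)$. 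One must check that the reduced game $G_{-j}$ built from $(N,u_N,P)$ is itself expressible through unanimity games on the singleton $\{i\}$: computing $v_{-j}$ from the formula gives $v_{-j}(\{i\})=\frac{P(\{i\})\cdot 0+P(\{i,j\})\cdot 1}{P(\{i\})+P(\{i,j\})}$ whenever the denominator is positive, so $v_{-j}=v_{-j}(\{i\})\cdot u_i$ is a scalar multiple of the singleton unanimity game, and linearity on 1-player games reduces $\varphi_i(G_{-j})$ to $v_{-j}(\{i\})\cdot \varphi_i(\{i\},u_i,P_{-j})$. Thus $\varphi_i(N,u_N,P)$ is determined once we know the 1-player value $\varphi_i(\{i\},u_i,P_{-j})$; but that number is itself obtained by applying consistency in reverse, i.e. it is one of the quantities that consistency ties to 2-player unanimity games, or more directly it can be read off by noting that linearity forces $\varphi$ on 1-player games to be determined by its value on $u_i$, which in turn equals $\varphi_i(N',u_i,P')$ for the 2-player games produced from it — circularity is avoided because every chain terminates at a 2-player unanimity game and the $x_{kl}$.

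Putting the pieces together: by linearity, $\varphi(N,v,P)=v(\{i\})\varphi(N,u_i,P)+v(\{j\})\varphi(N,u_j,P)+\bigl(v(N)-v(\{i\})-v(\{j\})\bigr)\varphi(N,u_N,P)$, the first two terms are read off directly as combinations of the $x_{kl}$, and the third is determined coordinate-by-coordinate via the consistency/reduction argument of the previous paragraph, again ultimately in terms of the $x_{kl}$ (and the given $P$). Hence $(N,v,P)\mapsto\varphi(N,v,P)$ is fully determined by the $x_{kl}$, as claimed. The main obstacle I anticipate is the careful bookkeeping around the degenerate cases where some conditional probability has zero denominator — e.g. $P(\{i\})+P(\{i,j\})=0$ — since then $v_{-j}(\{i\})$ is set to $0$ by fiat and the reduced game is a trivial 1-player game; one must verify that in every such branch the relevant coordinate of $\varphi(N,u_N,P)$ is still forced, either by the full-control-type triviality of the degenerate 1-player game or by running the removal through the other player. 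This is routine but needs to be stated explicitly so that the determination is genuinely exhaustive over all $P$.
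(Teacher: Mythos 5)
Your proposal is correct and follows essentially the same route as the paper's proof: decompose $v$ in the unanimity basis $\{u_i,u_j,u_{ij}\}$, use linearity to reduce everything to the unanimity games, and handle $u_{ij}$ by noting both players are dependent, reducing to a one-player game via consistency, and then applying consistency a second time to identify $\varphi_i(\{i\},u_i,P_{-j})$ with $x_{ii}$ (since $(\{i\},u_i,P_{-j})$ is itself the reduction of $(N,u_i,P)$ by $j$). The degenerate branch $P(\{i\})+P(\{i,j\})=0$ that you flag is resolved exactly as you suspect: the reduced game is the all-zero game, so linearity forces the value to be $0$.
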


\proof Let $P$ be a fixed
probability distribution on $2^N$ with $N=\{i,j\}$. The set of unanimity games $%
\{u_{i},u_{j},u_{ij}\}$ forms a basis for the space of all TU games on $N$.
In particular, for any $(N,v)\in \mathcal{G}^N$ there are constants $%
\alpha_i,\alpha_j,\alpha_{ij}$ such that
\begin{equation}
v\equiv \alpha_iu_i+\alpha_ju_j+\alpha_{ij}u_{ij}.
\end{equation}
And thus, for arbitrary $P$ and $i\in N$, $\varphi$'s linearity implies
\begin{equation}
\varphi_i(N,v,P)=\alpha_i \underbrace{\varphi_i(N,u_i,P)}_{:=x_{ii}}+\alpha_j%
\underbrace{\varphi_i(N,u_j,P)}_{:=x_{ij}} +\alpha_{ij}\underbrace{%
\varphi_i(N,u_{ij},P)}_{:=x_{i,ij}}.
\end{equation}
We need to show that $x_{i,ij}$ and $x_{j,ij}$ are fully determined by $%
x_{ii}$ and $x_{ij}$.

To see this, notice first that both players are dependent in $(N,u_{ij},P)$. So
we may consider the reduced game obtained by $j$'s removal, which involves $%
N_{-j}=\{i\}$ and
\begin{align}
\begin{array}{ll}
P_{-j}(\varnothing)=P(\varnothing)+P(j), & \quad P_{-j}(i)=P(i)+P(ij), \\
(u_{ij})_{-j}(\varnothing)=0, & \quad (u_{ij})_{-j}(i)=
\begin{cases}
\frac{P(ij)}{P(i)+P(ij)} & \text{ if }P(i)+P(ij)>0, \\
0 & \text{ otherwise.}%
\end{cases}%
\end{array}%
\end{align}
In case $P(i)+P(ij)>0$, we have
\begin{align}
\varphi_{i}(N,u_{ij},P) &= \varphi_i\big(\{i\},\tfrac{P(ij)}{P(i)+P(ij)}%
\cdot u_{i} ,P_{-j}\big) \nonumber \\
&= \tfrac{P(ij)}{P(i)+P(ij)}\cdot \varphi_i(\{i\},u_{i},P_{-j}) \nonumber\\
&= \tfrac{P(ij)}{P(i)+P(ij)}\cdot \varphi_{i}(N,u_{i},P) = \tfrac{P(ij)}{%
P(i) +P(ij)}\cdot x_{ii},
\end{align}
where the first equality invokes consistency, the second linearity, and the
third one exploits that $(\{i\},u_{i},P_{-j})$ is the reduction of $%
(N,u_{i},P)$ by player~j and again consistency. When $P(i)=P(ij)=0$ we have $%
\varphi_{i}(N,u_{ij},P)=0$ because in this case $(u_{ij})_{-j}(\{i\})=0$ by
Definition~\ref{def:dependent_and_reduced_game}, so that $%
\left(u_{ij}\right)_{-j}$ is the all-zero game $\mathbf{0}$ in that case.
Consistency requires $\varphi_{i}(N,u_{ij},P)=\varphi_{i}(\{1%
\},(u_{ij})_{-j},P_{-j})= \varphi_{i}(\{i\},\mathbf{0},P_{-j})=0$ due to
linearity.

In summary,
\begin{equation}  \label{eq_varphi_u_12}
x_{i,ij}=
\begin{cases}
\tfrac{P(ij)}{P(i)+P(ij)}\cdot x_{ii} & \text{ if $P(i)+P(ij)>0$,} \\
0 & \text{otherwise.}%
\end{cases}%
\end{equation}
And in a similar fashion one obtains
\begin{equation}
x_{j,ij}=
\begin{cases}
\tfrac{P(ij)}{P(j)+P(ij)}\cdot x_{jj} & \text{if $P(j)+(Pij)>0$,} \\
0 & \text{otherwise}.%
\end{cases}%
\end{equation}
\endproof

For any $v\equiv \alpha_iu_i+\alpha_ju_j+\alpha_{ij}u_{ij}$ we have
\begin{equation}
\varphi_i(N,v,P)=
\begin{cases}
\alpha_j\cdot x_{ij}+\left(\alpha_i+\tfrac{\alpha_{ij}\cdot P(ij)}{P(i)+P(ij)%
}\right)\cdot x_{ii} & \text{ if $P(i)+P(ij)>0$}, \\
\alpha_j\cdot x_{ij}+\alpha_i\cdot x_{ii} & \text{otherwise}%
\end{cases}%
\end{equation}
and an analogous expression for $\varphi_j(N,v,P)$. This finding can be
generalized from just two players to arbitrary $N$:

\begin{lemma}
\label{thm_fully_specified} Let $\varphi$ be a consistent and
linear {extended value}. Then the mapping $(N,v,P)\mapsto \varphi(N,v,P)$ is
fully specified by the parameters in $(\ref{eq char2p})$.
\end{lemma}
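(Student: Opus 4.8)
The plan is to reduce the general $n$-player case to the $2$-player case established in Lemma~\ref{lemma_char2p} by iteratively peeling off dependent players. Given an arbitrary probabilistic game $(N,v,P)$ with $|N|\ge 3$, I would first invoke linearity to write $v=\sum_{S\subseteq N}\alpha_S u_S$ in the unanimity basis, so that $\varphi_i(N,v,P)=\sum_{S}\alpha_S\,\varphi_i(N,u_S,P)$; it therefore suffices to show that each number $\varphi_i(N,u_S,P)$ is determined by the parameters $x_{ij}=\varphi_i(N',u_j,P')$ ranging over $2$-player games. Fix a unanimity game $u_S$. If $|S|\le n-1$ then there is at least one player $k\notin S$ — and more generally any $k$ with $u_S(k)=0$, i.e.\ $k\notin S$ or $|S|\ge 2$ — who is dependent in $u_S$, since $u_S(k)=0$ unless $S=\{k\}$. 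So whenever $S\neq\{k\}$ for the chosen index, $k$ is dependent and consistency gives $\varphi_j(N,u_S,P)=\varphi_j(N_{-k},(u_S)_{-k},P_{-k})$ for every $j\neq k$.

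The key computational step is to identify the reduced game $(u_S)_{-k}$. Using Definition~\ref{def:dependent_and_reduced_game}: for $T\subseteq N\setminus k$, if $k\notin S$ then $u_S(T)=u_S(T\cup k)$ for all such $T$, so $(u_S)_{-k}=u_{S}$ as a game on $N\setminus k$ (the convex combination of two equal numbers). If instead $k\in S$ (which forces $|S|\ge 2$, the situation where $k$ is still dependent), then $u_S(T)=0$ and $u_S(T\cup k)=u_{S\setminus k}(T)$ whenever $S\setminus k\subseteq T$, so $(u_S)_{-k}(T)=\tfrac{P(T\cup k)}{P_{-k}(T)}\cdot u_{S\setminus k}(T)$ on the support of $P_{-k}$; this is generally \emph{not} a unanimity game, but it is a fixed TU game on $N\setminus k$ determined entirely by $S$, $P$, and $k$. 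In either case, $(u_S)_{-k}$ depends only on data already present, so by induction on $|N|$ its extended value — and hence $\varphi_j(N,u_S,P)$ — is fully specified by the two-player parameters. The base case $|N|=2$ is exactly Lemma~\ref{lemma_char2p}, and the case $|N|=1$ is trivial since then the only dependent-player reduction argument is vacuous and $u_S=u_i$ spans $\mathcal{G}^{\{i\}}$, so one checks directly that $\varphi_i(\{i\},v,P)$ is a scalar multiple of $x_{ii}$ — but in fact we only ever need $|N|\ge 2$ for the induction, since a $2$-player game reduces to a $1$-player game inside the proof of Lemma~\ref{lemma_char2p} already.

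I would organize the induction on $n=|N|$: the statement ``for all probabilistic games with at most $n$ players, $\varphi$ is determined by the parameters $(\ref{eq char2p})$'' holds for $n=2$ by Lemma~\ref{lemma_char2p}, and the inductive step uses the decomposition into unanimity games together with the reduction $(u_S)_{-k}$ just described, where for the unanimity game $u_i$ one needs the case where \emph{no} player other than $i$ can be removed only if $N=\{i\}$, which does not arise when $n\ge 3$. The one subtlety to handle carefully is the game $u_i$ for a single surviving player: when $S=\{i\}$ is a singleton and $i\in N$ with $|N|\ge 2$, every player $k\neq i$ is dependent in $u_i$ (since $u_i(k)=0$), and $(u_i)_{-k}=u_i$ on $N\setminus k$ because $u_i(T)=u_i(T\cup k)$ for all $T\subseteq N\setminus k$; iterating removes everyone but $i$ and lands in a $1$-player game, handled by consistency plus linearity exactly as in Lemma~\ref{lemma_char2p}'s treatment of $x_{i,ij}$.

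The main obstacle I anticipate is purely bookkeeping rather than conceptual: one must verify that the reduced game $(u_S)_{-k}$ is well-defined (which requires $k$ to be dependent in $u_S$, i.e.\ $u_S(k)=0$, true precisely when $S\neq\{k\}$) and that the reduction process can always find \emph{some} admissible player to remove when $|N|\ge 3$ — which holds because for any $S\subseteq N$ with $|N|\ge 3$ there is at least one index $k$ with $S\neq\{k\}$, indeed at least two players outside any singleton. Combined with the inductive hypothesis this closes the argument; no further estimates or case analysis beyond the two displayed forms of $(u_S)_{-k}$ are needed.
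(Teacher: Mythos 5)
Your proposal is correct and follows essentially the same route as the paper: express $v$ in the unanimity basis via linearity, remove a dependent player via consistency, and induct on $|N|$ with the two-player case (Lemma~\ref{lemma_char2p}) as the base, the key point in both arguments being that the induction hypothesis applies to \emph{arbitrary} reduced games, so it does not matter that $(u_S)_{-k}$ need not be a unanimity game. The only cosmetic difference is that you carry the reduction all the way down to the two-player parameters inside the lemma, whereas the paper stops at the $n$-player singleton values $\varphi_i(N,u_k,P)$ and performs that last reduction in the proof of Theorem~\ref{thm:main}.
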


\proof
Using the $n$-player unanimity games as a basis for $\mathcal{PG}^N$ one can
always write
\begin{equation}
v\equiv \sum\limits_{\varnothing\subsetneq T\subseteq N} \alpha_Tu_T.
\end{equation}
Letting $i\in N$ be an arbitrary but fixed player, we will use induction on
$n$ in order to prove the following

\medskip \noindent \textit{Claim:} There exist $\beta_{ij}$, depending on
the $\alpha_T$ and $P$, such that

\begin{equation}
\varphi_i(N,v,P)=\sum_{j=1}^{n} \beta_{ij}x_{ij} \text{ where } x_{ij}:=\varphi_i(N,u_j,P).
\end{equation}

\noindent The claim is obvious for a single player and was proven for $|N|=2$
in Lemma~\ref{lemma_char2p}. In view of linearity, it suffices to prove
the statement for unanimity games $u_{T}$, where nothing needs to be shown
when the cardinality of $T$ is one. So we consider $|N|\geq 3$, $|T|\geq 2$
and assume that the statement is true for all player sets $N$ of cardinality
$n-1$. Let $j\in N\setminus i$ be a player, which must be dependent in $%
u_{T} $ because $|T|\geq 2$. Now we consider the reduced game $%
(N_{-j},(u_{T})_{-j},P_{-j})$. From consistency we conclude
\begin{equation*}
\varphi _{i}(N,u_{T},P)=\varphi _{i}(N_{-j},(u_{T})_{-j},P_{-j}).
\end{equation*}%
Applying the induction hypothesis implies the existence of $\beta
_{ik}^{\prime }$, which depend on $P_{-j}$ and hence on $P$, such that
\begin{equation*}
\varphi _{i}(N,u_{T},P)=\sum_{k=1,k\neq j}^{n}\beta _{ik}^{\prime }\varphi
_{i}(N_{-j},u_{k},P_{-j}).
\end{equation*}%
Since $(u_{k})_{-j}=u_{k}$ the reduced game of $(N,u_{k},P)$ is given by $%
(N_{-j},u_{k},P_{-j})$ for all $1\leq k\leq n$ with $j\neq k$. Inserting $%
\varphi _{i}(N_{-j},u_{k},P_{-j})=\varphi _{i}(N,u_{k},P)=x_{ik}$ then
proves the claim, and the theorem. \endproof

We remark that the coefficients $\beta_{ij}$ referred to in the above proof
get quite complicated for increasing $n$. In the following we will use only
the fact that they are well-defined given $v$ and $P$.

\begin{proof}[Proof of Theorem \ref{thm:main}] 
To complete the proof we now show how the values $x_{ii}=\varphi_i(N,u_i,P)$ and $x_{ij}=%
\varphi_i(N,u_j,P)$ can be computed from the corresponding values for the
player set $N^{\prime }=\{i,j\}$. Since $\left(u_i\right)_{-j}=u_i$ for all $%
i\neq j$ we can recursively conclude from consistency
\begin{eqnarray}
\varphi_i(N,u_i,P)&=&\varphi_i(\{i,j\},u_i,P^\star) \text{ and} \label{eq_xii_red} \\
\varphi_i(N,u_j,P)&=&\varphi_i(\{i,j\},u_j,P^\star) ,  \label{eq_xij_red}
\end{eqnarray}
where
\begin{equation}
P^\star(S)=\sum_{T\subseteq N\setminus ij} P(S\cup T)\text{ for any }%
S\subseteq \{i,j\} .  \label{eq_P_star}
\end{equation}

Using equation~(\ref{eq_P_star}) and similarly defining
\begin{equation}
P^{\prime }(S)=\sum_{T\subseteq N\setminus i} P(S\cup T)\text{ for any }%
S\subseteq \{i\},
\end{equation}
we conclude $\varphi_i(\{i\},u_i,P^{\prime })=\varphi_i(\{i,j\},u_i,P^\star)$
from consistency. Thus, the full control property, in connection with
consistency and linearity, implies $x_{ii}=1$ for all player sets $N$
(containing player $i$). If $\varphi$ satisfies (IDDP) the values of $x_{ij}$ are determined, and hence $\varphi$ is determined on the class of
$2$-player probabilistic games. Then $\varphi\equiv
\xi$ follows from Lemma~\ref{thm_fully_specified}. Finally note that the full control property and (IDDP) do not depend on the labeling of the players, which implies anonymity.
\end{proof}
\subsection*{Proof of Lemma \ref{nonredundant}} $\text{ }$\\
\noindent
(i) Linearity of $\Phi ^{+}$ follows from (\ref{eq:Phi+}). For notational convenience put $\tilde{P}=P|i$. For the
reduced game $G_{-j}=(N_{-j},v_{-j},P_{-j})$ we get
\begin{eqnarray*}
&&\Phi _{i}^{+}(N_{-j},v_{-j},P_{-j})\\
 &=&\mathbb{E}_{\tilde{P}%
-j}[v_{-j}(S)-v_{-j}(S\setminus i)]=\sum_{S\subseteq N\backslash j}\tilde{P}%
_{-j}[v_{-j}(S)-v_{-j}(S\setminus i)] \\
&=&\sum_{S\subseteq N\backslash j}\left( \tilde{P}(S)+\tilde{P}(S\cup
j)\right) [v_{-j}(S)-v_{-j}(S\setminus i)] \\
&=&\sum_{S\subseteq N\backslash j}\left( \tilde{P}(S)+\tilde{P}(S\cup
j)\right) v_{-j}(S)-\sum_{S\subseteq N\backslash j}\left( \tilde{P}(S)+\tilde{P%
}(S\cup j)\right) v_{-j}(S\setminus i) \\
&=&\sum_{S\subseteq N\backslash j}\tilde{P}(S)v(S)+\tilde{P}(S\cup j)v(S\cup
j)-\sum_{S\subseteq N\backslash j}\left( \tilde{P}(S)v(S\setminus i)-\tilde{P}%
(S\cup j)v((S\cup j)\backslash i)\right)  \\
&=&\sum_{S\subseteq N\backslash j}\tilde{P}(S)\left[ v(S)-v(S\setminus i)%
\right] +\sum_{S\subseteq N\backslash j}\tilde{P}(S\cup j)\left[ v(S\cup
j)-v((S\cup j)\backslash i))\right]  \\
&=&\sum_{S\subseteq N}\tilde{P}(S)\left[ v(S)-v(S\setminus i)\right]
=\sum_{S\subseteq N}P|i\left[ v(S)-v(S\setminus i)\right]  \\
&=&\Phi _{i}^{+}(N,v,P).
\end{eqnarray*}
We conclude that $\Phi _{i}^{+}(N,v,P)$ is consistent.

\noindent The verification of \emph{full control} provides
\begin{eqnarray}
\Phi _{i}^{+}(\{i\},v,P) &=&\mathbb{E}_{P|i}[v(S)-v(S\setminus i)]
\nonumber  \\
&=&P|i\left( \{i\}\right) v(\{i\}) \label{FC}
\end{eqnarray}
which is equal to one for $v=u_{i}$ and $P\left( \{i\}\right) >0$ and
equal to zero if $P\left( \{i\}\right) =0$.

\noindent To see that $\Phi ^{+}$ does not satisfy (IDDP) note that
\begin{eqnarray}
\Phi _{i}^{+}(\{i,j\},v,P) &=&\mathbb{E}_{P|i}[v(S)-v(S\setminus i)] \nonumber  \\
&=&P|i\left( \{i,j\}\right) [v(\{i,j\})-v(\{j\})]+P|i\left( \{i\}\right)
[v(\{i\})-v(\varnothing )].
\end{eqnarray}
For the unanimity game $u_{j}$ follows{\ }%
\begin{equation}
\Phi _{i}^{+}(\{i,j\},u_{j},P)=0.  \label{notIDDP}
\end{equation}

\noindent (ii) $\Psi _{i}^{2}(N,v,P)$ inherits linearity and consistency from $\xi $ and $\Phi^{+}$. From (\ref{notIDDP}) follows
\begin{equation*}
\Psi _{i}^{2}(\{i,j\},u_{j},P)=\xi _{i}(\{i,j\},u_{j},P)
\end{equation*}
and therefore (IDDP). From Proposition~\ref{propFC_IDDP} and (\ref{FC}) 
 we know that both $\xi $ and $\Phi ^{+}$ satisfy full control such that
\begin{equation*}
\Psi _{i}^{2}(\{i\},u_{i},P)=0,
\end{equation*}
contrary to Definition \ref{defFULLCONTROL}.\\

\noindent (iii)  Linearity is obvious. For $|N|\leq 2$ the extended value $\Psi ^{3}$ is
identical to the PV and the latter satisfies full control and (IDDP). For a
counterexample to consistency consider a game $G_{-j}=(N,v,P)$ with $%
\left\vert N\right\vert =3$ and perfect correlation $P(N)=1/2=P(\varnothing
) $. Here,
\begin{equation}
\Psi _{i}^{3}(N,v,P)=0 \mbox{ for all } i\in N. \label{psi}
\end{equation}
However, for the
reduced game $G_{-j}=(N_{-j},v_{-j},P_{-j})$ we get
\begin{align}
N_{-j}& =N\setminus j, \nonumber \\
\noalign{\vskip6pt}P_{-j}(S)& =P(S)+P(S\cup j)\text{ for all $S\subseteq
N\setminus j$} \nonumber\\
& =1/2\text{ for }S\in \left\{ N\backslash j,\varnothing \right\} \text{ and
}0\text{ otherwise, } \nonumber\\
v_{-j}\left( S\right) & =
\begin{cases}
v(\varnothing ) & \text{for }S=\varnothing \nonumber\\
v(N) & \text{for }S=N\backslash j \nonumber\\
0 & \text{otherwise.}%
\end{cases}%
\end{align}
For $\Psi ^{3}$ follows
\begin{equation*}
\Psi _{i}^{3}(N_{-j},v_{-j},P_{-j})=v(N)-v(\varnothing )=v(N)  \mbox{ for all } i\in N
\end{equation*}
which does not coincide with \eqref{psi}.\\

\noindent (iv) Consider the reduced game $G_{-j}=(N_{-j},v_{-j},P_{-j})$. PV is consistent and therefore
\begin{equation*}
\xi _{i}(N_{-j},(u_{S})_{-j},P_{-j})=\xi _{i}(N,u_{S},P)\text{ for all }i\in
N\setminus j\text{.}
\end{equation*}
We conclude
\begin{eqnarray*}
\Psi _{i}^{4}(N_{-j},v_{-j},P_{-j}) &=&\sum\limits_{S\subseteq N:\alpha
_{S}\neq 0}\xi _{i}(N_{-j},(u_{S})_{-j},P_{-j}) \\
&=&\sum\limits_{S\subseteq N:\alpha _{S}\neq 0}\xi _{i}(N,u_{S},P)=\Psi
_{i}^{4}(N,v,P)\text{ for all }i\in N\setminus j
\end{eqnarray*}
which confirms consistency.

\noindent Full control and (IDDP) follows from $\Psi
_{i}^{4}(\{i\},u_{i},P)=\xi _{i}(\{i\},u_{i},P)$ and $\Psi
_{i}^{4}(\{i,j\},u_{j},P)=\xi _{i}(\{i,j\},u_{j},P)$.

\noindent To verify that $\Psi _{i}^{4}$ is not linear put $w=\sum\limits_{S\subseteq N}\beta _{S}\cdot
u_{S}$.
\begin{equation*}
\Psi _{i}^{4}(N,v+w,P)=\sum\limits_{S\subseteq N:\alpha _{S}+\beta _{S}\neq
0}\xi _{i}(N,u_{S},P)
\end{equation*}%
which is in general not equal to
\begin{equation*}
\sum\limits_{S\subseteq N:\alpha _{S}\neq 0}\xi
_{i}(N,u_{S},P)+\sum\limits_{S\subseteq N:\beta _{S}\neq 0}\xi
_{i}(N,u_{S},P).
\end{equation*}
\raggedleft{$\square$}

\raggedright

\subsection*{Proof of Theorem \ref{thm_prob_val_eq_cpi}}

The proof is based on three insights, stated in Lemmas \ref{lemma_c_1}--\ref{lemma_exclude_one}.

\begin{lemma}
\label{lemma_c_1} From $\Psi(\cdot,Q)\equiv \xi(\cdot,P)$ follows $Q_i(S)=P|i(S)$ for all $\{i\}\subseteq S\subseteq N$.
\end{lemma}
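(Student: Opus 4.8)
The plan is to test the hypothesized identity only on unanimity games $u_T$ with $i\in T$, and then recover $Q_i$ and $P|i$ coalition by coalition via downward induction on coalition size (a Möbius‑inversion argument).

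First I would compute both sides for such a game. For the probabilistic value, $u_T(S)-u_T(S\setminus i)$ equals $1$ precisely when $T\subseteq S$ and $i\in T$, and $0$ otherwise; hence for $i\in T$ one gets $\Psi_i(N,u_T,Q)=\sum_{S\colon T\subseteq S\subseteq N}Q_i(S)$. For the prediction value, since $i\in T$ no coalition $S\not\ni i$ can contain $T$, so the second sum in Definition~\ref{def:cpi} vanishes, while the first sum reduces to $\sum_{S\colon T\subseteq S\subseteq N}u_T(S)\,P|i(S)=\sum_{S\colon T\subseteq S\subseteq N}P|i(S)$ (every superset of $T$ automatically contains $i$, so the constraint $S\ni i$ is free). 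Therefore $\Psi(\cdot,Q)\equiv\xi(\cdot,P)$ implies
\begin{equation*}
\sum_{S\colon T\subseteq S\subseteq N}Q_i(S)=\sum_{S\colon T\subseteq S\subseteq N}P|i(S)\qquad\text{for every }T\text{ with }i\in T\subseteq N .
\end{equation*}

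Next I would run a downward induction on $n-|T|$ over the coalitions $T$ containing $i$. The base case $T=N$ gives $Q_i(N)=P|i(N)$ directly. For the inductive step, split off the $S=T$ term on each side: $Q_i(T)+\sum_{S\supsetneq T,\,S\subseteq N}Q_i(S)=P|i(T)+\sum_{S\supsetneq T,\,S\subseteq N}P|i(S)$; by the induction hypothesis every $S\supsetneq T$ already satisfies $Q_i(S)=P|i(S)$, so the two tail sums cancel and $Q_i(T)=P|i(T)$. Since $T$ ranges over all coalitions with $i\in T\subseteq N$ (in particular $T=\{i\}$), this yields $Q_i(S)=P|i(S)$ for all $\{i\}\subseteq S\subseteq N$, as claimed.

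I do not expect a genuine obstacle here; the only point requiring care is the bookkeeping that makes the two expressions collapse to the same ``upper sum'', namely the observation that for $i\in T$ membership of $i$ in any superset of $T$ is automatic, which removes the domain restrictions on the summations and makes the $P|\neg i$‑term disappear.
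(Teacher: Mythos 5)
Your proposal is correct and is essentially identical to the paper's own proof: both evaluate $\Psi_i$ and $\xi_i$ on the unanimity games $u_T$ with $i\in T$, observe that both reduce to the same upper sum $\sum_{S\supseteq T}(\cdot)$ because membership of $i$ is automatic for supersets of $T$, and then recover the individual values by downward induction on $|T|$ starting from $T=N$. No gaps.
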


\begin{proof}
For an arbitrary subset $\{i\}\subseteq S\subseteq N$ we consider the
unanimity game $u_S$ and obtain the formulas
\begin{equation*}
\xi_i(u_S,P) = \sum\limits_{T\ni i} u_S(T)\cdot P\!\!\mid\!\! i(T)-
\sum\limits_{T \not\ni i} u_S(T)\cdot P\!\!\mid\!\! \neg i(T)
=\sum\limits_{T:S\subseteq T} P\!\!\mid\!\! i(T)
\end{equation*}
and
\begin{equation*}
\Psi_{i}(u_S,Q)=\sum\limits_{\{i\}\subseteq T\subseteq N}Q_i(T)\Big[%
u_S(T)-u_S(T\backslash i)\Big]=\sum\limits_{T:S\subseteq T} Q_i(T).
\end{equation*}
Now we prove the proposed statement by induction on the subsets $S$ in
decreasing order of their cardinalities using the assumption $%
\xi_i(u_S,P)=\Psi_{i}(u_S,Q)$. For the induction start $S=N$ we have $%
P|i(N)=Q_i(N)$. Using the induction hypothesis for all $S^{\prime
}\subseteq N$ with $|S^{\prime }|>|S|$ yields $P|i(S)=Q_i(S)$.
\end{proof}

\begin{lemma}
\label{lemma_c_2} From $\Psi(\cdot,Q)\equiv \xi(\cdot,P)$ follows $P|i(U)=P\neg i(U\backslash i)$ for all $\{i\}\subseteq U\subseteq N$ with
$|U|\ge 2$.
\end{lemma}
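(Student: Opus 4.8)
The plan is to treat the hypothesis $\Psi(\cdot,Q)\equiv\xi(\cdot,P)$ as an identity between two linear functionals of the characteristic function $v$ on $\mathcal{G}^N$, feed in the relation $Q_i(S)=P|i(S)$ already established in Lemma~\ref{lemma_c_1}, cancel the terms common to both sides, and then compare coefficients of $v$.

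Concretely, first I would substitute $Q_i(S)=P|i(S)$ into the definition of the probabilistic value to get $\Psi_i(N,v,Q)=\sum_{S\ni i}P|i(S)\,v(S)-\sum_{S\ni i}P|i(S)\,v(S\setminus i)$, while $\xi_i(N,v,P)=\sum_{S\ni i}P|i(S)\,v(S)-\sum_{T\not\ni i}P|\neg i(T)\,v(T)$. Equating the two expressions, the first sums agree and cancel, leaving
\[
\sum_{S\ni i}P|i(S)\,v(S\setminus i)=\sum_{T\not\ni i}P|\neg i(T)\,v(T)\qquad\text{for every }(N,v)\in\mathcal{G}^N.
\]
Next I would re-index the left-hand sum through the bijection $S\mapsto T:=S\setminus i$ between $\{S\colon i\in S\}$ and $2^{N\setminus i}$, which turns the identity into $\sum_{T\subseteq N\setminus i}\bigl[P|i(T\cup i)-P|\neg i(T)\bigr]\,v(T)=0$, valid for all TU games $v$.

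To conclude, I would use that a TU game is nothing but an arbitrary real-valued function on $2^N$ with $v(\varnothing)=0$: for each fixed nonempty $T_0\subseteq N\setminus i$, plugging in the game defined by $v(T_0)=1$ and $v(T)=0$ for $T\ne T_0$ forces $P|i(T_0\cup i)=P|\neg i(T_0)$. Writing $U=T_0\cup i$, this is exactly the asserted equality for all $U\ni i$ with $|U|=|T_0|+1\ge2$; the case $T_0=\varnothing$ is harmlessly excluded because its coefficient is multiplied by $v(\varnothing)=0$, which is precisely why the statement is restricted to $|U|\ge2$.

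I do not expect a genuine obstacle: the heart of the argument is a one-line cancellation followed by a coefficient comparison. The only place demanding care is the bookkeeping in the re-indexing step — one must check that $S\setminus i$ on the left and $T$ on the right both sweep out all of $2^{N\setminus i}$ (including $\varnothing$) before extracting coefficients, and note that the constraint $v(\varnothing)=0$ is exactly what prevents any conclusion for singleton $U$.
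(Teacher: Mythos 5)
Your proof is correct, but it takes a genuinely different route from the paper's. The paper proves the lemma self-containedly (without invoking Lemma~\ref{lemma_c_1}): for each $S\ni i$ it evaluates both sides on the unanimity game $u_{N\backslash S}$, notes that $i$ is a null player there so $\Psi_i(u_{N\backslash S},Q)=0$, and then runs an induction over the sets $S$ in increasing order of cardinality to peel off the individual equalities $P|i(N\backslash S\cup i)=P|\neg i(N\backslash S)$ from the telescoping sums $\sum_{T\colon N\backslash S\subseteq T\subseteq N\backslash i}\bigl(P|i(T\cup i)-P|\neg i(T)\bigr)=0$. You instead feed the already-established identity $Q_i(S)=P|i(S)$ into the full functional identity $\Psi_i(N,\cdot,Q)\equiv\xi_i(N,\cdot,P)$, cancel the common term $\sum_{S\ni i}P|i(S)v(S)$, and compare coefficients by testing against the indicator games $v=\mathbf{1}_{T_0}$, which are legitimate TU games for every nonempty $T_0$ since the only constraint on $v$ is $v(\varnothing)=0$. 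Your route buys a one-shot argument with no induction, and your closing observation — that the coefficient of $T_0=\varnothing$ is invisible because it multiplies $v(\varnothing)=0$, which is exactly why the lemma is restricted to $|U|\ge2$ — is a nice explanation of the hypothesis that the paper leaves implicit. The trade-off is that your proof makes Lemma~\ref{lemma_c_2} logically dependent on Lemma~\ref{lemma_c_1}, whereas the paper keeps the two lemmas independent consequences of the hypothesis; since Lemma~\ref{lemma_c_1} is proved first in either ordering, this costs nothing.
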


\begin{proof}
We set $U=N\backslash S\cup i$ so that we have to prove $P|i(N\backslash
S\cup i)=P|\neg i(N\backslash S)$ for all subsets $\{i\}\subseteq
S\subsetneq N$.

For fixed $S$ we consider the unanimity game $u_{N\backslash S}$ and obtain
the formulas
\begin{eqnarray*}
\xi_i(u_{N\backslash S},P) &=& \!\!\sum\limits_{T\ni i} u_{N\backslash
S}(T)\cdot P\!\!\mid\!\! i(T)- \sum\limits_{T\not\ni i} u_{N\backslash
S}(T)\cdot P\!\!\mid\!\! \neg i(T) \\
&=&\!\!\!\!\!\!\sum\limits_{T\colon N\backslash
S\subseteq T\subseteq N\backslash\{i\}}\!\!\!\!\!\!\!\!\!\!\!\!\!\!\! \Big(%
P\!\!\mid\!\! i(T\cup i)-P|\neg i(T)\Big)
\end{eqnarray*}
and
\begin{equation*}
\Psi_{i}(u_{N\backslash S},Q)=\sum\limits_{T\ni i}Q_i(T)\Big[u_{N\backslash S}(T)-u_{N\backslash S}(T\backslash i)\Big]%
=0.
\end{equation*}
Now we prove the proposed statement by induction on the subsets $S$ in
increasing order of their cardinalities using the assumption $%
\xi_i(u_S,P)=\Psi(u_S,Q)$. For the induction start $S=\{i\}$ we have $%
P|i(N)-P|\neg i(N\backslash i)=0$, which is equivalent to $P|i(N)=P|\neg
i(N\backslash\{i\})$. Using the induction hypothesis for all $S^{\prime
}\subseteq N$ with $|S^{\prime }|<|S|$ yields $P|i(N\backslash
S\cup i)=P|\neg i(N\backslash S)$.
\end{proof}

\medskip

Put $p_i:=\sum_{T\ni i
}P(T)\in[0,1]$ for all $i\in N$. Whenever $p_i>0$ we have $P|i(S)=\frac{P(S)%
}{p_i}$ for all $\{i\}\subseteq S\subseteq N$ and $P|i(S)=0$ in all other
cases. The next lemma excludes the case $p_i=1$ for at least two players.

\medskip

\begin{lemma}
\label{lemma_exclude_one} If $\Psi(\cdot,Q)\equiv \xi(\cdot,P)$ and if there
exists an index $i\in N$ with $p_i=1$, then $n=1$.
\end{lemma}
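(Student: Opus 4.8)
The plan is to argue by contradiction: suppose $\Psi(\cdot,Q)\equiv\xi(\cdot,P)$, that some player~$i$ has $p_i=\sum_{T\ni i}P(T)=1$, and yet $n\ge 2$. Since $p_i=1$ means $P(T)=0$ for every coalition $T\not\ni i$, the conditional distribution $P|\neg i$ is identically zero, so for every TU game $v$ we have $\xi_i(N,v,P)=\mathbb{E}_{P|i}[v(S)]=\sum_{T\ni i}P(T)v(T)$. On the other hand, $\Psi_i(N,v,Q)=\sum_{T\ni i}Q_i(T)\big(v(T)-v(T\setminus i)\big)$. By Lemma~\ref{lemma_c_1} we already know $Q_i(T)=P|i(T)=P(T)$ for all $\{i\}\subseteq T\subseteq N$, so equating the two expressions for an arbitrary $v$ yields the identity $\sum_{T\ni i}P(T)v(T)=\sum_{T\ni i}P(T)\big(v(T)-v(T\setminus i)\big)$, i.e. $\sum_{T\ni i}P(T)\,v(T\setminus i)=0$ for \emph{all} $v$.

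Next I would exploit the freedom in choosing $v$. Because $n\ge 2$, pick any player $j\ne i$ and consider the unanimity game $v=u_{\{j\}}$ (which, note, need not vanish on coalitions not containing $i$). Then $v(T\setminus i)=u_{\{j\}}(T\setminus i)=1$ exactly when $j\in T$, so the identity forces $\sum_{T\ni i,\,T\ni j}P(T)=0$, hence $P(T)=0$ for every coalition containing both $i$ and $j$. Since this holds for every $j\ne i$, the only coalition containing $i$ that can carry positive probability is $\{i\}$ itself; combined with $p_i=1$ this gives $P(\{i\})=1$. But then take $v=u_{\{i,j\}}$ for some $j\ne i$: here $\xi_i(N,u_{\{i,j\}},P)=P(\{i\})\cdot u_{\{i,j\}}(\{i\})=0$ while from the probabilistic-value side $\Psi_i(N,u_{\{i,j\}},Q)$ should again be computed via $Q_i$, and one checks it is also $0$, so this particular choice is not yet a contradiction — instead the contradiction comes more directly: the displayed identity $\sum_{T\ni i}P(T)v(T\setminus i)=0$ must hold for the single-point measure $P=\delta_{\{i\}}$, which reads $v(\varnothing)=0$, always true, so I actually need to push back one step and extract the contradiction from the structure of $Q$.

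The cleaner route, which I would take instead, is to observe that $P=\delta_{\{i\}}$ forces $Q_i=\delta_{\{i\}}$ by Lemma~\ref{lemma_c_1}, and then to invoke the constraint that $\Psi$ is a genuine probabilistic value, so each $Q_k$ for $k\ne i$ must be a probability distribution on $\{S\colon k\in S\}$; applying Lemma~\ref{lemma_c_1} with the roles of $i$ and $k$ swapped gives $Q_k(S)=P|k(S)=P(S)/p_k$ whenever $p_k>0$, but $P=\delta_{\{i\}}$ makes $p_k=\sum_{T\ni k}P(T)=0$ for every $k\ne i$, so $Q_k\equiv 0$, contradicting that $Q_k$ is a probability distribution (it must sum to~$1$). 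Hence no such $k$ can exist, forcing $N=\{i\}$ and $n=1$.

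The main obstacle is bookkeeping about which conditional objects are well-defined: one must be careful that $P|\neg i$ and the $P|k$ for $k\ne i$ are exactly the degenerate/zero objects claimed when $p_i=1$, and that Lemma~\ref{lemma_c_1} is genuinely applicable to player $k$ (it is, since the hypothesis $\Psi(\cdot,Q)\equiv\xi(\cdot,P)$ is symmetric in its treatment of all players). Once the zero-ing out of every $Q_k$, $k\ne i$, is pinned down, the contradiction with $\sum_{S\ni k}Q_k(S)=1$ is immediate, so the bulk of the work is simply tracking the support of $P$ and transporting it to $Q$ via the already-proven lemma.
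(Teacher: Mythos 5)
Your final argument is correct and reaches the paper's contradiction by the same mechanism: once $P$ is pinned down to the point mass on $\{i\}$, Lemma~\ref{lemma_c_1} applied to any $k\neq i$ gives $p_k=0$ and hence $Q_k\equiv 0$, contradicting $\sum_{S\ni k}Q_k(S)=1$. The only real difference lies in how $P(\{i\})=1$ is established: the paper deduces $P|i(T)=0$ for all $T\ni i$ with $|T|\ge 2$ from Lemma~\ref{lemma_c_2} combined with $P|\neg i\equiv 0$, whereas you re-derive the same fact directly by testing the identity $\sum_{T\ni i}P(T)\,v(T\setminus i)=0$ against the unanimity games $u_{\{j\}}$, $j\neq i$ --- a self-contained computation that is in effect the relevant special case of Lemma~\ref{lemma_c_2}, so neither route buys much over the other. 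The write-up would only benefit from deleting the exploratory detour in the middle (the $u_{\{i,j\}}$ cul-de-sac), since the clean chain is: identity applied to $u_{\{j\}}$ gives $P=\delta_{\{i\}}$, hence $p_k=0$ for $k\neq i$, hence $Q_k\equiv 0$ by Lemma~\ref{lemma_c_1}, contradiction.
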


\begin{proof}
From $p_i=\sum_{T\ni i}P(T)=1$ we conclude $P(S)=0$ for all $S\subseteq N\backslash i $.
Thus we have $P|i(T)=0$ for all $T\ni i$ with $|T|\ge 2$ due to Lemma~\ref{lemma_c_2}. This yields $P(\{i\})=$ $Q_i(\{i\})=1$ and $Q_j(S)=0$ for all $S\ni j$, where $(S,j)\neq (\{i\},i)$, and all $j\in N$ due to Lemma~\ref{lemma_c_1}.
For each $j\in N\backslash i$ we then have $\sum_{S\ni j} Q_j(S)=0\neq 1$~--  a contradiction.
\end{proof}

\begin{proof}[Proof of Theorem \ref{thm_prob_val_eq_cpi}]
From Lemma~\ref{lemma_exclude_one} we conclude
$$0\le p_i:=\sum\limits_{T\ni i}P(T)<1$$
for all $i\in N$. If $%
p_i=0$ for an index $i\in N$, then we have $Q_i(S)=0$ due to Lemma~\ref%
{lemma_c_1}, which contradicts the definition of the $Q_i(S)$. Thus we have
$0<p_i<1$. Later on it will turn out that indeed we can choose $\tilde{p}%
_i=p_i$.

We have
\begin{equation*}
P(S)=\frac{p_i}{1-p_i}\cdot P(S\backslash i)
\end{equation*}
for all $S\ni i$ with $|S|\ge 2$ due to Lemma~\ref{lemma_c_2} and $p_i>0$.
Thus inductively we obtain
\begin{equation*}
P(S)=\prod_{j\in S\backslash i} \frac{p_j}{1-p_j}\cdot P(\{i\})
\end{equation*}
for all $i\in N$ and all subsets $S\ni i$ of $N$.

Inserting the previous equations into $p_i=\sum\limits_{S\ni i}P(S)$ yields
\begin{equation*}
p_i=P(\{i\})\cdot\sum\limits_{S\not\ni i} \prod_{j\in S}%
\frac{p_j}{1-p_j}=P(\{i\})\cdot\prod_{j\in N\backslash i} \left(\frac{p_j%
}{1-p_j}+1\right) =P(\{i\})\cdot\prod_{j\in N\backslash i}\frac{1}{1-p_j}.
\end{equation*}
Thus we have
\begin{equation*}
P(\{i\})=p_i\,\cdot\,\prod_{j\in N\backslash i}(1-p_j),
\end{equation*}
which then yields
\begin{equation}  \label{eq_product}
P(S)=\prod_{j\in S} p_j\,\cdot\,\prod_{j\in N\backslash S}(1-p_j)
\end{equation}
for all $\varnothing\neq S\subseteq N$. By using $\sum_{S\subseteq
N}P(S)=1$ we conclude that equation~(\ref{eq_product}) is also valid for the
empty set and thus for all subsets of $N$. \\
Lemma~\ref{lemma_c_1} and a short calculation gives also the first formula
of the proposed statement.\\ \vspace{0.5cm}

To verify that the converse holds as well let $0<p_i<1$ be given for all $%
i\in N$ and define
\[P(S)=\prod_{j\in S}p_j\cdot\prod_{j\in N\backslash S}
\left(1-p_j\right),
\] i.e., $P$ is a product measure. Next set
\[
Q_i(S\cup i)=P|i(S\cup i)=\prod_{j\in S}p_j\cdot\prod_{j\in
N\backslash (S\cup i)} \left(1-p_j\right),
\] for $i\in N\backslash S$, i.e.\ the $Q_i(S\cup i)$ derive
from the same product measure. We can easily verify $P|i(S)=P|\neg i(S%
\backslash i)$ for all $S\ni i$ and all $i\in N$.
Inserting this into the definition of the prediction value provides $\xi(\cdot,P)=\Psi(\cdot,Q)$.

\end{proof}

\bigskip

\setlength{\labelsep}{-0.265cm}

\begin{thebibliography}{}

\bibitem[\protect\citeauthoryear{Amer and Gim\'{e}nez}{Amer and
  Gim\'{e}nez}{2007}]{Amer/Gimenez:2007}
Amer, R. and J.~M. Gim\'{e}nez (2007).
\newblock Characterization of binomial semivalues through delegation games.
\newblock {\em Naval Research Logistics\/}~{\em 54\/}(6), 702--708.

\bibitem[\protect\citeauthoryear{Banzhaf}{Banzhaf}{1965}]{Banzhaf:1965}
Banzhaf, J.~F. (1965).
\newblock Weighted voting doesn't work: {A} mathematical analysis.
\newblock {\em Rutgers Law Review\/}~{\em 19\/}(2), 317--343.

\bibitem[\protect\citeauthoryear{Calvo and Santos}{Calvo and
  Santos}{2000}]{Calvo/Santos:2000}
Calvo, E. and J.~C. Santos (2000).
\newblock Weighted weak semivalues.
\newblock {\em International Journal of Game Theory\/}~{\em 29\/}(1), 1--9.

\bibitem[\protect\citeauthoryear{Carreras and Freixas}{Carreras and
  Freixas}{2008}]{Carreras/Freixas:2008}
Carreras, F. and J.~Freixas (2008).
\newblock On ordinal equivalence of power measures given by regular semivalues.
\newblock {\em Mathematical Social Sciences\/}~{\em 55\/}(2), 221--234.

\bibitem[\protect\citeauthoryear{Carreras and Puente}{Carreras and
  Puente}{2012}]{Carreras/Puente:2012}
Carreras, F. and M.~Puente (2012).
\newblock Symmetric coalitional binomial semivalues.
\newblock {\em Group Decision and Negotiation\/}~{\em 21}, 637--662.

\bibitem[\protect\citeauthoryear{Casajus}{Casajus}{2012}]{Casajus:2012}
Casajus, A. (2012).
\newblock Amalgamating players, symmetry, and the {Banzhaf} value.
\newblock {\em International Journal of Game Theory\/}~{\em 41\/}(3), 497--515.

\bibitem[\protect\citeauthoryear{DeGroot}{DeGroot}{1974}]{DeGroot:1974}
DeGroot, M.H. (1974)
\newblock Reaching a Consensus.
\newblock {\em Journal of the American Statistical Association\/}~{\em 69\/}(345), 118-–121.

\bibitem[\protect\citeauthoryear{Dubey, Neyman, and Weber}{Dubey
  et~al.}{1981}]{Dubey/Neyman/Weber:1981}
Dubey, P., A.~Neyman, and R.~J. Weber (1981).
\newblock Value theory without efficiency.
\newblock {\em Mathematics of Operations Research\/}~{\em 6}, 122--128.

\bibitem[\protect\citeauthoryear{Felsenthal and Machover}{Felsenthal and
  Machover}{1998}]{Felsenthal/Machover:1998}
Felsenthal, D. and M.~Machover (1998).
\newblock {\em The Measurement of Voting Power -- {Theory} and Practice,
  Problems and Paradoxes}.
\newblock Cheltenham: Edward Elgar.

\bibitem[\protect\citeauthoryear{Grabisch and Rusinowska}{Grabisch and
  Rusinowska}{2010}]{Grabisch/Rusinowska:2010}
Grabisch, M. and A.~Rusinowska (2010).
\newblock Different approaches to influence based on social networks and simple
  games.
\newblock In A.~van Deemen and A.~Rusinowska (Eds.), {\em Collective Decision
  Making -- Views from Social Choice and Game Theory}, Volume~43 of {\em Theory
  and Decision Library C}. Berlin: Springer.

\bibitem[\protect\citeauthoryear{Holler}{Holler}{1982}]{Holler:1982:polstud}
Holler, M. (1982).
\newblock Forming coalitions and measuring voting power.
\newblock {\em Political Studies\/}~{\em 30\/}(2), 262--271.

\bibitem[\protect\citeauthoryear{Holler and Li}{Holler and
  Li}{1995}]{Holler/Li:1995}
Holler, M.~J. and X.~Li (1995).
\newblock From public good index to public value. an axiomatic approach and
  generalization.
\newblock {\em Control and Cybernetics\/}~{\em 24\/}(3), 257--270.

\bibitem[\protect\citeauthoryear{Laruelle and Valenciano}{Laruelle and
  Valenciano}{2005}]{Laruelle/Valenciano:2005}
Laruelle, A. and F.~Valenciano (2005).
\newblock Assessing success and decisiveness in voting situations.
\newblock {\em Social Choice and Welfare\/}~{\em 24\/}(1), 171--197.

\bibitem[\protect\citeauthoryear{Laruelle and Valenciano}{Laruelle and
  Valenciano}{2008}]{Laruelle/Valenciano:2008}
Laruelle, A. and F.~Valenciano (2008).
\newblock {\em Voting and Collective Decision-Making}.
\newblock Cambridge: Cambridge University Press.

\bibitem[\protect\citeauthoryear{Lehrer}{Lehrer}{1988}]{Lehrer:1988}
Lehrer, E. (1988).
\newblock An axiomatization of the {Banzhaf value}.
\newblock {\em International Journal of Game Theory\/}~{\em 17}(2), 89--99.


\bibitem[\protect\citeauthoryear{Luce and Raiffa}{Luce and Raiffa}{1957}]{Luce/Raiffa:1957}
Luce, R.D. and H. Raiffa (1957).
\newblock Games and Decisions: Introduction and Critical Survey.
\newblock New York: John Wiley.


\bibitem[\protect\citeauthoryear{Owen}{Owen}{1975}]{Owen:1975:Banzhaf}
Owen, G. (1975).
\newblock Multilinear extensions and the {B}anzhaf value.
\newblock {\em Naval Research Logistics Quarterly\/}~{\em 22\/}(4), 741--750.

\bibitem[\protect\citeauthoryear{Owen}{Owen}{1978}]{Owen:1978}
Owen, G. (1978).
\newblock Characterization of the {Banzhaf-Coleman} index.
\newblock {\em SIAM Journal on Applied Mathematics\/}~{\em 35}(2), 315--327.

\bibitem[\protect\citeauthoryear{Owen}{Owen}{1995}]{Owen:1995}
Owen, G. (1995).
\newblock {\em Game Theory\/} (3rd ed.).
\newblock San Diego, CA: Academic Press.

\bibitem[\protect\citeauthoryear{Roth}{Roth}{1988}]{Roth:1988}
Roth, A.~E. (Ed.) (1988).
\newblock {\em The {Shapley} Value -- {E}ssays in Honor of {Lloyd S.~Shapley}}.
\newblock Cambridge: Cambridge University Press.

\bibitem[\protect\citeauthoryear{Shapley}{Shapley}{1953}]{Shapley:1953}
Shapley, L.~S. (1953).
\newblock A value for $n$-person games.
\newblock In H.~W. Kuhn and A.~W. Tucker (Eds.), {\em Contributions to the
  Theory of Games}, Volume~II, pp.\  307--317. Princeton, NJ: Princeton
  University Press.

\bibitem[\protect\citeauthoryear{van~den Brink, Rusinowska, and
  Steffen}{van~den Brink et~al.}{2013}]{vandenBrink/Rusinowska/Steffen:2013}
van~den Brink, R., A.~Rusinowska, and F.~Steffen (2013).
\newblock Measuring power and satisfaction in societies with opinion leaders:
  an axiomatization.
\newblock {\em Social Choice and Welfare\/}~{\em 41\/}(3), 671--683.

\bibitem[\protect\citeauthoryear{Weber}{Weber}{1988}]{Weber:1988}
Weber, R.~J. (1988).
\newblock Probabilistic values for games.
\newblock In A.~E. Roth (Ed.), {\em The Shapley Value: Essays in Honor of Lloyd
  S. Shapley}, pp.\  101--119. Cambridge, MA: Cambridge University Press.

\end{thebibliography}

\newcommand{\noopsort}[1]{}

\end{document}